\documentclass[lettersize,journal]{IEEEtran}
\usepackage{subcaption}
\usepackage{xspace}
\usepackage{rotating}
\usepackage{amsmath,amssymb,amsfonts}
\usepackage{mathalfa}
\usepackage{algorithmic}
\usepackage{graphicx}
\usepackage{pifont}
\usepackage{textcomp}
\usepackage[ruled,linesnumbered]{algorithm2e}
\usepackage{xcolor}
\usepackage{lipsum}
\usepackage{booktabs}
\usepackage{adjustbox}
\usepackage{caption}
\usepackage[most]{tcolorbox}
\usepackage{colortbl}
\usepackage{xcolor}
\usepackage{adjustbox}
\usepackage{xcolor}
\usepackage{tikz}
\usetikzlibrary{shapes.geometric}
\usepackage{bbding}
\usepackage{MnSymbol}
\usepackage{wasysym}
\usepackage[colorlinks=true, linkcolor=blue]{hyperref}
\usepackage{multirow}
\usepackage[table]{xcolor} 
\usepackage{colortbl}
\usepackage{booktabs}
\usepackage{float}
\usepackage{dblfloatfix}
\usepackage{amsthm}

\definecolor{outer20}{RGB}{241,106,126}
\definecolor{outer24}{RGB}{188,152,48}
\definecolor{outer28}{RGB}{96,176,64}
\definecolor{outer32}{RGB}{64,168,150}
\definecolor{outer36}{RGB}{74,144,226}
\definecolor{outer40}{RGB}{214,102,214}

\definecolor{inner22}{RGB}{241,106,126}
\definecolor{inner26}{RGB}{188,152,48}
\definecolor{inner30}{RGB}{96,176,64}
\definecolor{inner34}{RGB}{64,168,150}
\definecolor{inner38}{RGB}{74,144,226}
\usepackage{amssymb}

\usepackage{amsmath}
\DeclareMathOperator*{\argmin}{arg\,min}

\renewcommand{\tabcolsep}{2pt}

\newcommand{\localupdate}{{\ensuremath{\sf{\mathsf ClientUpdate}}}\xspace}
\newcommand{\aggregate}{{\ensuremath{\sf{\mathsf ServerAgg}}}\xspace}
\newcommand{\keygen}{{\ensuremath{\sf{\mathsf CKKS.KeyGen}}}\xspace}
\newcommand{\encode}{{\ensuremath{\sf{\mathsf CKKS.Encode}}}\xspace}
\newcommand{\encrypt}{{\ensuremath{\sf{\mathsf CKKS.Encrypt}}}\xspace}
\newcommand{\decode}{{\ensuremath{\sf{\mathsf CKKS.Decode}}}\xspace}
\newcommand{\decrypt}{{\ensuremath{\sf{\mathsf CKKS.Decrypt}}}\xspace}
\newcommand{\add}{{\ensuremath{\sf{\mathsf CKKS.Add}}}\xspace}

\newcommand{\mult}{{\ensuremath{\sf{\mathsf CKKS.Mult}}}\xspace}
\newcommand{\multconst}{{\ensuremath{\sf{\mathsf CKKS.MultConst}}}\xspace}
\newcommand{\acro}{pFedCKKS\xspace}
\newcommand\ignore[1]{}

\newtheorem{lemma}{Lemma}
\newcommand{\finetune}{{\ensuremath{\sf{\mathsf FedFinetune}}}\xspace}
\newcommand{\fedper}{{\ensuremath{\sf{\mathsf FedPer}}}\xspace}
\newcommand{\ditto}{{\ensuremath{\sf{\mathsf Ditto}}}\xspace}
\newcommand{\blackpentagon}{\mbox{%
  \begin{tikzpicture}[baseline=-0.6ex]
    \fill (90:3.6pt) -- (162:3.6pt) -- (234:3.6pt) -- (306:3.6pt) -- (18:3.6pt) -- cycle;
  \end{tikzpicture}%
}}

\newtcbtheorem{Takeaway}{Takeaway}{
  enhanced,
  breakable, 
  drop shadow={black!50!white},
  coltitle=black,
  fonttitle=\bfseries,
  top=0.2in,
  attach boxed title to top left=
  {xshift=1.5em,yshift=-\tcboxedtitleheight/2},
  boxed title style={size=small,colback=pink}
}{takeaway}

\begin{document}


\title{Exploring CKKS Parameter Trade-offs for Privacy-Preserving Personalized Federated Learning}


\author{
Kamolchanok~Saengtong,
Phanwadee~Sinthong,
Norrathep~Rattanavipanon%
\thanks{Kamolchanok Saengtong and Norrathep Rattanavipanon are with the College of Computing, Prince of Songkla University, Phuket, Thailand.}%
\thanks{Phanwadee Sinthong is with the School of Informatics, Walailak University, Nakhon Si Thammarat, Thailand.}%
\thanks{Corresponding authors: Phanwadee Sinthong (phanwadee.si@wu.ac.th) and Norrathep Rattanavipanon (norrathep.r@psu.ac.th).}
}

\maketitle


\begin{abstract}
Privacy-preserving Personalized Federated Learning (PFL) enables clients to collaboratively train personalized models without exposing raw data, but exchanged model updates remain vulnerable to inference attacks from honest-but-curious servers. 
Homomorphic Encryption (HE) addresses this by allowing server-side aggregation directly on encrypted updates, with the CKKS scheme being particularly suitable due to its native support for approximate floating-point arithmetic. However, no prior work has examined how to configure CKKS for PFL deployments, leaving practitioners without principled guidance on parameter selection that directly affects privacy, precision, and computational cost.

This paper presents \acro, a generic framework integrating CKKS into PFL, and provides the first systematic parameter selection guide for practitioners. We derive the full CKKS parameter constraints under 128-bit security for the PFL setting, showing the selection problem reduces to choosing just two values: the inner and outer ciphertext prime. 
Implemented using the Flower framework and TenSEAL library, pFedCKKS is evaluated on the FEMNIST, CelebA and Sentiment140 datasets with \finetune, \ditto and \fedper which represents PFL algorithms. 
Experimental results reveal an empirical trade-off between precision and computational/communication costs.
This allows us to draw a concrete guideline for selecting proper CKKS parameters that balance efficiency and accuracy in real-world deployments of \acro. 
\end{abstract}



\section{Introduction}
The growing demand for data-driven Artificial Intelligence technologies has led to large-scale personal data collection, which in turn stimulates serious privacy concerns. 
In response, many countries have enacted privacy regulations (e.g., GDPR) that limit the sharing and use of personal information. 
To reconcile the conflict between data utility and privacy, Federated Learning (FL)~\cite{mcmahan2017communication} has emerged as a promising solution. 
As shown in Figure~\ref{fig:setting}, instead of transmitting raw data to a central server, FL allows each client to {\color{red}\ding{192}} train a local machine learning (ML) model and {\color{red}\ding{193}} share only model updates (e.g., gradients or parameters) with the server. 
The server then aggregates these local updates into a global model {\color{red}\ding{194}} and redistribute it to clients for the next training round {\color{red}\ding{195}}. 
This process is repeated until the global model converges or a predefined number of rounds is reached, after which it produces the final global model that can be shared with all participating clients. 
By keeping data on-device, FL reduces privacy risks while allowing organizations to work with sensitive data to comply with regulatory requirements.

A key practical limitation of standard FL is that all clients ultimately obtains the same ML model, i.e., the one in a red box of Figure~\ref{fig:setting}. 
This can lead to suboptimal performance in real-world scenarios, where client data is typically heterogeneous and non-independent and identically distributed (non-IID)~\cite{zhao2018federated}.  
In such settings, a single ML model from FL cannot effectively capture all distributions across clients. 
To address this, Personalized Federated Learning (PFL)~\cite{tan2022towards} has been proposed. 
Unlike traditional FL, PFL enables each client to develop a personalized model (i.e., the ones in green dotted boxes of Figure~\ref{fig:setting}) tailored to its local data.
As a result, PFL has been shown to improve performance in practical non-IID settings~\cite{arivazhagan2019federated,li2021ditto,collins2022fedavg,tan2022towards,lu2024federated,fallah2020personalized,hanzely2020federated}. 

\begin{figure}[h]
    \centering
    \includegraphics[width=\columnwidth]{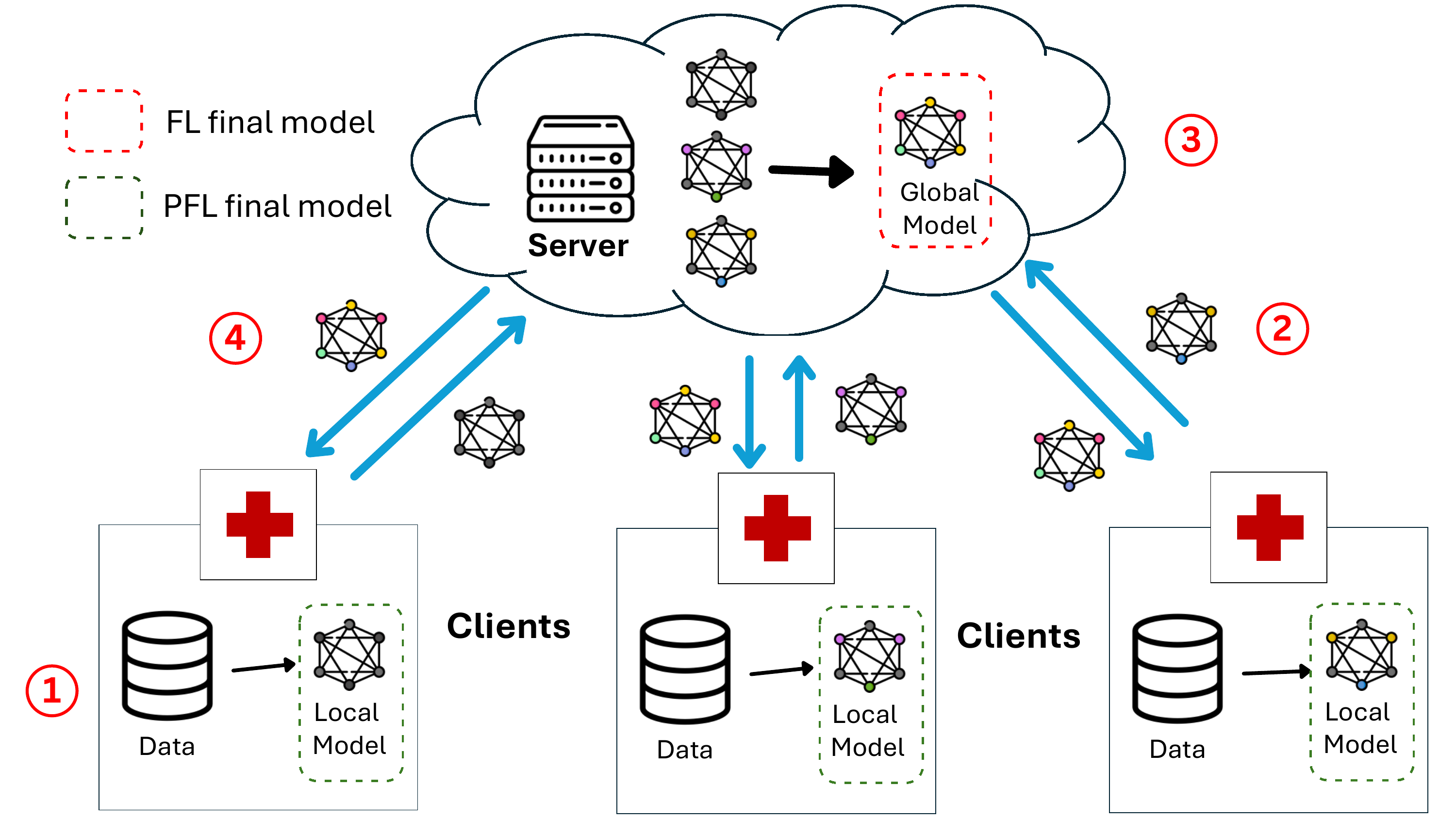}
    \caption{Federated Learning (FL) vs Personalized Federated Learning (PFL)}
    \label{fig:setting}
\end{figure}

While FL/PFL reduce privacy risks compared to the centralized approach, they are still not immune to privacy leakage.
Prior studies have demonstrated that local updates still retain sensitive information of the training data, making them vulnerable to inference attacks. 
These attacks, for example, can enable the central server to reconstruct private training data~\cite{zhu2019deep,wang2019beyond}, infer sensitive properties of the training dataset~\cite{melis2019exploiting,luo2021feature}, or determine whether a specific record (i.e., membership) was used during training~\cite{nasr2019comprehensive,shokri2017membership}.



To mitigate these risks, various techniques have been proposed, among which is homomorphic encryption (HE).
HE enables certain computations to be performed directly on encrypted gradients or model parameters without requiring decryption.
This allows the central server to aggregate encrypted local updates without ever seeing the underlying plaintexts.
Hence, the global model can be computed at the ciphertext-level on the central server and is only decryptable on the client's side, preventing the central server from performing the privacy attacks on local updates.
Among existing HE schemes, the CKKS (Cheon-Kim-Kim-Song) scheme~\cite{cheon2017homomorphic,cheon2018full} has become the preferred choice of HE for machine learning tasks since it efficiently supports approximate arithmetic over real numbers, which are the standard data type for gradients or model parameters.
This has led to several studies exploring integration of CKKS into FL to enhance its privacy~\cite{pan2024fedshe,yao2023efficient,qiu2022privacy}.
However, to the best of our knowledge, no prior work has investigated the use of CKKS in the context of PFL.
Incorporating CKKS into PFL poses non-trivial practical challenges related to parameter selection, where
these parameters can significantly impact various performance aspects such as encryption/decryption time, ciphertext size, and the precision of decrypted results.
This complexity leaves open the question of how best to configure and optimize CKKS for PFL settings, where heterogeneous computing resources and network conditions have a direct effect on model training and performance.

To bridge this gap, this paper first presents the integration of the CKKS scheme into PFL and then conducts an extensive evaluation by exploring key factors to assess the performance impact introduced by CKKS. The main contributions of this work are as follows:

\begin{itemize}
    \item \textbf{New Framework.} We design a generic framework, called \acro, which integrates PFL with the CKKS scheme. 
    To demonstrate its compatibility with major PFL algorithms, we apply our integration to popular PFL strategies: FedAVG+Finetuned~\cite{collins2022fedavg}, FedPer~\cite{arivazhagan2019federated} and Ditto~\cite{li2021ditto}. 
    These three collectively represent major PFL categories according to a recent PFL taxonomy~\cite{fan2024lightweight}.

    \item \textbf{Analysis of CKKS Parameters.} To quantify the impact of this integration, we first analyze the full set of CKKS parameters used in \acro under the standard security level. Our analysis reveals that the CKKS ciphertext moduli are the most influential, as it strongly controls \acro soundness. Our analysis also reveals constraints on how the modulus values should be selected to achieve the standard security level.

    \item \textbf{Empirical Evaluation.} Given the constraints, we evaluate \acro across a wide range of ciphertext modulus values to identify the combination that yields optimal performance; our experiments focus on three metrics: computation, communication, and precision, using three different datasets (two vision and one text classification tasks) 
    The results indicate that ciphertext moduli of $(28,26,28)$ (i.e., 28-bit outer primes and 26-bit inner prime) provide a practical operating point across all datasets, yielding the best balance between precision and performance.
    
\end{itemize}

\textbf{Organization:}  
The rest of this paper is organized into the following sections:
\hyperref[sec:preliminaries]{Section~\ref{sec:preliminaries}} introduces background on PFL algorithms used in this work (\finetune, \ditto, \fedper) and the CKKS encryption scheme. 
\hyperref[sec:framework]{Section~\ref{sec:framework}} describes the \acro system model, adversary assumptions, workflow, and analysis. 
\hyperref[sec:experiments]{Section~\ref{sec:experiments}} details the experimental setup and evaluates performance. 
\hyperref[sec:related]{Section~\ref{sec:related}} reviews prior work on federated learning and personalized FL. 
Finally, the paper concludes in \hyperref[sec:conclusion]{Section~\ref{sec:conclusion}}.

\section{Preliminaries}
\label{sec:preliminaries}


\subsection{Personalized Federated Learning}\label{subsec:pfl}

In PFL, we consider a distributed system consisting of one central server and $n$ clients.
Each client $C_i$ holds a private dataset $D_i$. 
PFL operates in multiple rounds where each round runs two algorithms in sequence:

\begin{itemize}
    \item $\localupdate(\Theta, \theta_i, D_i) \rightarrow \hat{\Theta}_i, \hat{\theta_i}$: Executed on client $C_i$, this algorithm takes as input global parameters $\Theta$ (shared for all clients) and personalized parameters $\theta_i$ (specific to $C_i$) and then trains on $D_i$, producing updated parameters $\hat{\Theta_i}$ and $\hat{\theta_i}$. $\hat{\Theta_i}$ is sent to the server while $C_i$ keeps $\hat{\theta_i}$ locally used as $\theta_i$ in the next round.

    \item $\aggregate(\hat{\Theta}_1, ..., \hat{\Theta}_n) \rightarrow \Theta$: Executed on the central server, this algorithm aggregates locally updated shareable parameters from all clients ($\hat{\Theta}_1, ..., \hat{\Theta}_n$) into a new global $\Theta$, which is then sent to all clients for the next round's \localupdate.
\end{itemize}

The objective of PFL is to find personalized parameters $\hat{\theta}_1, ..., \hat{\theta}_n$ that minimizes the following global objective:

\begin{center}
    $\hat{\theta}_1, ..., \hat{\theta}_n = \argmin G(L_1, ..., L_n)$
\end{center}

where $L_i = L(\theta_i,D_i;\Theta)$ is the local loss on client $C_i$, $L$ denotes the loss function, and $G$ is commonly computed as a weighted average of local losses w.r.t. the dataset sizes:

\begin{center}
    $G(L_1,...,L_n) = \sum_{i=1}^n L_i \cdot \frac{|D_i|}{\sum_{j=1}^n|D_j|}$
\end{center}

Let $\phi$ denote as a null/unused variable.
Next, we describe how \localupdate and \aggregate can be instantiated in PFL algorithms experimented in this work, where the choice of these algorithms is motivated by the recent PFL taxonomy~\cite{fan2024lightweight}.

\textbf{FedAVG+Finetuned (\finetune)~\cite{collins2022fedavg}} extends FedAVG~\cite{mcmahan2017communication} to support non-IID settings by adopting personalized parameters from fine-tuned models rather than directly using the global model from FedAVG. 
Specifically, \finetune implements each round's algorithm as:

\begin{itemize}
    \item $\finetune.\localupdate(\Theta, \phi, D_i)$: Client $C_i$ receives the entire model parameters $\Theta$ from the server (randomly initialized by the server in the first round) and performs stochastic gradient descent (SGD) using $\Theta$ on $D_i$ for a fixed number of steps. It then produces updated parameters $\hat{\Theta}_i$. If this is a final round, it sets $\hat{\theta}_i$ as $\hat{\Theta}_i$ to serve as the client's personalized model; otherwise $\hat{\theta}_i = \phi$.

    \item $\finetune.\aggregate(\hat{\Theta}_1, ..., \hat{\Theta}_n)$: The server aggregates all $\hat{\Theta}$-s using a weighted average:
    \begin{equation}\label{eq:weighted-avg}
    \Theta \leftarrow \sum_{i=1}^n \Theta_i \cdot \frac{|D_i|}{\sum_{j=1}^n|D_j|}
    \end{equation}
\end{itemize}

\noindent\textbf{FedPer~\cite{arivazhagan2019federated}} splits training of a deep learning model into two disjoint parts: the base layers (i.e., early layers) are shared with the central server, while the personalized layers are kept private on each client.

\begin{itemize}
    \item $\mathsf{FedPer}.\localupdate(\Theta, \theta_i, D_i)$: Client $C_i$ receives the global base layers $\Theta$ from the server (random weights in the first rounds) and personalized parameters $\theta_i$ (i.e., $\hat{\theta}_i$ from the previous round). It then performs SGD on the combined model ($\Theta, \theta_i$) using $D_i$ for a fixed number steps. This produces updated parameters for the shared base layers $\hat{\Theta}_i$ and the personalized layers $\hat{\theta}_i$.

    \item $\mathsf{FedPer}.\aggregate(\hat{\Theta}_1, ..., \hat{\Theta}_n)$: A common approach in FedPer is to aggregate all shared base layers via a weighted average, following Equation~\ref{eq:weighted-avg}.
\end{itemize}

\noindent\textbf{Ditto~\cite{li2021ditto}.} Unlike previous strategies, \ditto completely separates the entire personalized model from the global one and introduces an additional proximal term when training the personalized parameters.

\begin{itemize}
    \item $\mathsf{Ditto}.\localupdate(\Theta, \phi, D_i)$: Client $C_i$ receives the global parameters $\Theta$ from the server (random weights in the first rounds) and performs SGD using $\Theta$ on $D_i$ for a fixed number steps to obtain $\hat{\Theta}$. Then, it derives $\theta_i$ from $\Theta$ by solving the following optimization problem:

    \begin{equation}
        \theta_i \leftarrow \argmin_{\theta}(L(\theta,D_i)+\frac{\lambda}{2}||\theta-\Theta||^2)
    \end{equation}

    where $\lambda$ is a hyperparameter in which higher $\lambda$ leads to $\theta_i$ being closer to the global model.
    
    \item $\mathsf{Ditto}.\aggregate(\hat{\Theta}_1, ..., \hat{\Theta}_n)$: Similar to \finetune and \fedper, a weighted average is commonly a preferred choice for aggregation on the server, i.e., this algorithm follows Equation~\ref{eq:weighted-avg}.
\end{itemize}


\subsection{Cheon-Kim-Kim-Song (CKKS) Scheme}

The main ``claim to fame'' of the CKKS scheme is its ability to perform HE for approximate arithmetic that natively supports \emph{complex numbers}. This makes this scheme well-suited for floating-point data such as model parameters and gradients.
In addition, CKKS features Single Instruction Multiple Data (SIMD) parallelism; this allows a vector of floating-point data to be packed into a single ciphertext and processed simultaneously.

Security of this scheme relies on the computational hardness assumption of the Ring-based Learning With Errors (RLWE) problem~\cite{lyubashevsky2010ideal}, which encrypts a message under noisy inner products.
In particular, CKKS is defined over the polynomial ring $\mathcal{R}$ with integer coefficients, i.e., $\mathcal{R} = \mathbb{Z}[X]/(X^N+1)$ where $N$ is the degree of the polynomial modulus (typically a power of two).
The scheme supports a pre-defined number $L$ of multiplicative depths (or levels), with each level $0 < l \leq  L$ associated with a ciphertext prime modulus $q_l$.
Besides $q_l$-s, $q_0$ is used for encryption and $q_{L+1}$ represents a special last prime to ensure sufficient precision for decryption.
Intuitively, $L$ specifies how many times homomorphic multiplication can be performed in the current CKKS setup.
We define the total modulus at level $l$ as $Q_{l} = q_{L+1} \cdot \prod_{j=0}^{l} q_j$ and a ciphertext at the same level is an element in $\mathcal{R}^2_{Q_{l}}$, i.e., a two-component polynomial over $R$ modulo $Q_{l}$.

The CKKS scheme consists of the following algorithms:

\begin{itemize}
    \item $\keygen(1^\lambda, N, [q_0,...,q_L]) \rightarrow (pk, sk, evk):$ On a security parameter $\lambda$, it outputs a public encryption key $pk$, a private decryption key $sk$ and an evaluation key $evk$.

    \item $\encode(z,\Delta) \rightarrow m$: It encodes an (N/2)-dimensional vector of complex numbers $z \in \mathbb{C}^{N/2}$ into a polynomial $m \in \mathcal{R}$ suitable for encryption. During this process, it uses a scaling factor $\Delta$ to control precision.

    \item $\decode(m, \Delta) \rightarrow z$: It is the inverse of \encode where it decodes polynomial $m$ back into a (N/2)-dimensional vector $z$, \emph{approximately} recovering the original values by dividing by $\Delta$.

    \item $\encrypt(m, pk) \rightarrow c$: It encrypts $m$ using the public encryption key $pk$, resulting in a ciphertext $c \in \mathcal{R}^2_{Q_{L}}$.

    \item $\decrypt(c, sk) \rightarrow m$: It recovers $m$ from the ciphertext $c$ using the private decryption key $sk$. 

    \item $\add(c_1, c_2) \rightarrow c_{add}$: It performs homomorphic addition on two ciphertexts $c_1, c_2$, resulting in $c_{add}$. Conceptually, $c_{add}$ is equivalent to $\encrypt(m_1+m_2, pk)$.

    \item $\mult(c_1, c_2, evk) \rightarrow c_{mult}$: Given the evaluation key $evk$ and current multiplicative level $l$ with ciphertext modulus $Q_l$, it performs homomorphic multiplication on two ciphertexts $c_1, c_2 \in \mathcal{R}^2_{Q_l}$, resulting in $c_{mult}$, where $c_{mult} \equiv \encrypt(m_1\cdot{}m_2, pk)$.  
    After multiplication, it rescales the ciphertext modulus reducing it to $Q_{l-1}$, i.e., resulting in $c_{mult} \in \mathcal{R}^2_{Q_{l-1}}$.

    \item $\multconst(c, a, evk) \rightarrow c'$: It is a special case of $\mult$ where a ciphertext $c$ is multiplied by a constant $a \in \mathbb{R}$ producing a new ciphertext $c'$ that is equivalent to $\encrypt(a\cdot{}m, pk)$. Rescaling is also commonly applied, making $c' \in \mathcal{R}^2_{Q_{l-1}}$ as a result of this operation.
    
\end{itemize}


\section{\acro: Framework for PFL Integration with CKKS}
\label{sec:framework}

The goal of this work is to investigate the performance impact of introducing the CKKS scheme to enhance privacy in PFL.
To this end, we first present the system and adversary models under our envisioned privacy-preserving settings in Section~\ref{subsec:system} and Section~\ref{subsec:threat}, respectively.
Next, in Section~\ref{subsec:acro}, we describe the generic composition of PFL and CKKS that forms the basis of our study.
Finally, in Section~\ref{subsec:analysis}, we evaluate the security and soundness of this composition and analyze the CKKS parameter choices required to achieve these properties.

\subsection{System Model}\label{subsec:system}

In this work, we target cross-silo PFL settings as depicted in Figure~\ref{fig:setting}, where each client represents a company or organization (e.g., a bank or hospital). 
In this setting, every client has its own private dataset;
the dataset is not directly shareable due to privacy and legal concerns. 
Instead, the goal is for clients to collaboratively improve their local models by leveraging knowledge from other clients' data without exposing or exchanging the raw data itself.
As a cross-silo setting, the number of clients is relatively small (typically between 2 and 100), and they are expected to participate reliably and faithfully throughout the training process. 
The central server, distinct from the clients, coordinates and orchestrates the training until completion.

\subsection{Threat Model}\label{subsec:threat}
We follow the standard adversary assumption from prior work applying HE to traditional FL~\cite{zhang2020batchcrypt,correia2025federatedlearningapproachhybrid}:  an \emph{honest-but-curious server} and \emph{honest clients}.
Specifically, the server complies with the protocol but may attempt to exploit the information exchanged during protocol execution to violate clients’ privacy, e.g., by launching membership inference attacks using client's local updates.
In line with the cross-silo setting, the clients are organizations subject to privacy regulations and therefore have little incentive to behave maliciously. Accordingly, we consider all clients to be honest throughout the protocol and do not collude with the server.

We assume secure communications between the server and each client, e.g., by employing TLS-based connections, ensuring that no external adversaries (or other clients) can eavesdrop on the data exchanged during the training process. 
Before the training process, we assume the presence of a leader (either a client elected by the server~\cite{zhang2020batchcrypt} or a trusted third party~\cite{correia2025federatedlearningapproachhybrid}) responsible for running \keygen to generate the CKKS keys: $pk$, $sk$, and $evk$. 
The leader then securely distributes them to all clients, while providing only $pk$ and $evk$ to the server.

\subsection{Our Framework: \acro}\label{subsec:acro}

\begin{figure}[h]
    \centering
    \includegraphics[width=\columnwidth]{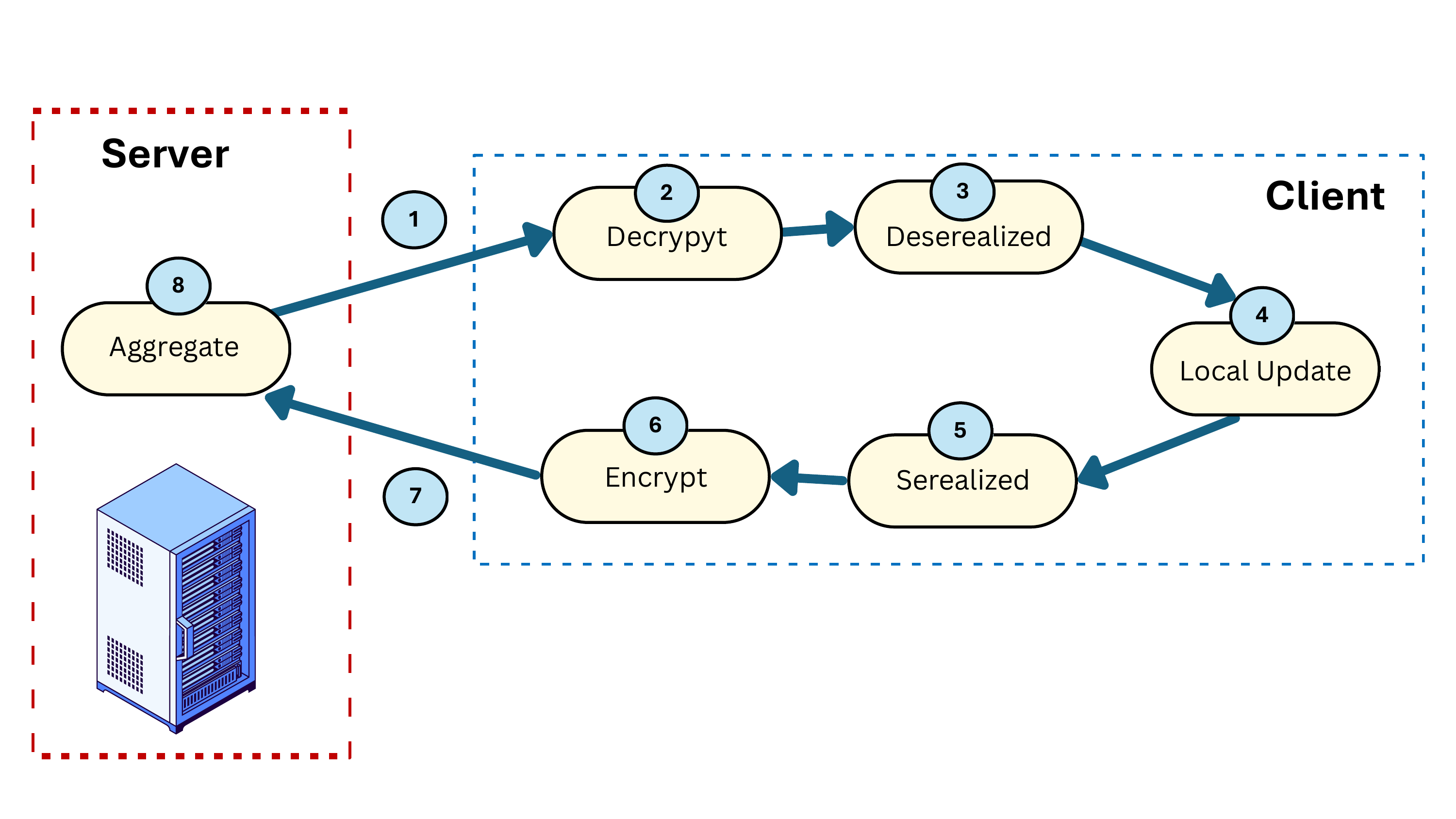}
    \caption{The workflow of \acro}
    \label{fig:workflow}
\end{figure}

\SetKwComment{Comment}{\newline\textcolor{blue}{// }}{}
\renewcommand\CommentSty[1]{\textcolor{blue}{\textit{#1}}} 
\SetKwInOut{Input}{input}
\SetKwInOut{Output}{output}

\begin{algorithm}
    \SetAlgoLined
    \small
    \caption{\acro}\label{alg:acro}    \Input{Number of rounds $t$, CKKS parameters: $\Delta$, $L$, $(q_0,...,q_{L+1})$}
    \Output{$\theta_i$ for client $C_i$}
    \Comment{Setup}
    The server elects a leader to run $\keygen(1^\lambda, (q_0,...,q_L))$\;
    The leader produces $(pk,sk,evk)$, shares them with  clients and sends only $pk,evk$ to server\;
    \Comment{Server}
    The server initializes global model $\Theta$ and uses $pk$ to serialize and encrypt $\Theta$, resulting in $\Theta^{e}$\;
    \For{$round \in \{1,...,t\}$}{
        Send $\Theta^{e}$ to clients\;
        $(\Theta_1^{e}, |D_1|), ..., (\Theta_N^{e}, |D_n|) \leftarrow$ Receive updates from clients\;
        $\Theta^{e} \leftarrow \encrypt(\encode(0,\Delta), pk)$\;
        \For{$j \in \{1,...,n\}$}{
            $w_j \leftarrow n_j/\sum_i^n|D_i|$\;
            $m_j \leftarrow \multconst(\Theta_j^{e}, w_j)$\;
            $\Theta^{e} \leftarrow \add(\Theta^{e},m_j)$\;
        }
    }
    \Comment{Client $C_i$}
    Each client $C_i$ initializes a local model $\theta_i$\;
    \For{$t \in \{1,...,T\}$}{
        $\Theta^{e} \leftarrow$ Receive encrypted global parameters from server\;
        $\Theta \leftarrow \decode(\decrypt(\Theta^{e},sk), \Delta)$\;
        $\Theta \leftarrow Deserialize(\Theta)$\;
        $\hat{\Theta}_i, \hat{\theta}_i \leftarrow \localupdate(\Theta, \theta_i, D_i)$\;
        $\hat{\Theta}_i \leftarrow Serialize(\hat{\Theta}_i)$\;
        $\hat{\Theta}_i^{e} \leftarrow \encrypt(\encode(\hat{\Theta}_i,\Delta), pk)$\;
        Send ($\hat{\Theta}^{e}$, $|D_i|$) to server\;
    }
    Use $\theta_i$ as the final model\;
\end{algorithm}

We present \acro, a generic framework that integrates CKKS into PFL to provide privacy protection against an honest-but-curious server.
\acro operates in two main phases: setup and training.
In the setup phase, following Section~\ref{subsec:threat}, a leader is selected to generate $pk$, $sk$, and $evk$. The leader then distributes $pk$, $sk$, and $evk$ to all clients, while only $pk$ and $evk$ are shared with the server.

The training phase consists of eight steps, illustrated in Figure~\ref{fig:workflow} and detailed in Algorithm~\ref{alg:acro}.
Each client first initializes its personalized model $\theta_i$, while the server initializes the global model $\Theta$ using either random weights or a pre-trained model. Using $pk$, the server encrypt $\Theta$, obtaining $\Theta^e$. 
It then distributes $\Theta^e$ to all clients to begin the first round of PFL training in Step (1).
In Step (2), upon receiving $\Theta^e$, each client $C_i$ decrypts and decodes it to recover the global model as a floating-point vector.
Next, it deserializes the vector into the correct model format (e.g., PyTorch), yielding $\Theta$ in Step (3).

In Step (4), it performs a local PFL update using the underlying PFL algorithm (e.g., \finetune.\localupdate, \ditto.\localupdate or \fedper.\localupdate), resulting in an updated global model $\hat{\Theta}_i$ and a new personalized model $\hat{\theta}_i$. 
$\hat{\theta}_i$ is retained for the next training round while the client prepares $\hat{\Theta}_i$ for transmission to the server.
In Step (5), $\hat{\Theta}_i$ is serialized into a floating-point vector, which is then encoded and encrypted in Step (6);
we note that the CKKS encryption in \acro is performed with the packing method, where it packs up to $N/2$ floating-point values into a single ciphertext.
In Step (7), the client sends the ciphertexts, along with the sample size used for training, back to the server.

Finally, in Step (8), once the server receives ciphertexts from all clients, it performs homomorphic weighted-average aggregation (Lines 7-14 of Algorithm~\ref{alg:acro}), obtaining the aggregate ciphertext $\Theta^e$.
At this point, $\Theta^e$ corresponds to the encrypted version of $\Theta$ as defined in Equation~\ref{eq:weighted-avg}. 
The server then continues with the next PFL round by sending $\Theta^e$ to the clients. 
This process repeats until all rounds ($T$) are consumed, after which each client $C_i$ adopts its $\hat{\theta}_i$ as the final personalized model.


\subsection{Analysis}\label{subsec:analysis}

Here, we analyze security and soundness of \acro.


\textbf{Security.} 
We informally argue that \acro protects the confidentiality of each client's uploaded model update $\hat{\Theta}_i$ against the honest-but-curious server, while providing the formal proof to Appendix~\ref{apdx:proof}.
During training, each client $C_i$ sends only its encrypted update $\hat{\Theta}_i^e$ and dataset size $|D_i|$ to the server; see Line~22 of Algorithm~\ref{alg:acro}. 
The dataset size is intentionally revealed because it is required for weighted aggregation, but it discloses only the number of local samples and not the raw training records.

Let $\mathcal{L}$ denote the information explicitly visible to the server, including the number of clients, client participation in each round, ciphertext sizes, model/update dimensions, public CKKS parameters, and $|D_i|$.
Given $\mathcal{L}$, the server does not learn $\hat{\Theta}_i$ from $\hat{\Theta}_i^e$.
This follows from the IND-CPA security of CKKS under the RLWE hardness assumption~\cite{cheon2017homomorphic,cheon2018full,benaissa2021tenseal}.
In particular, the server receives only the public key and evaluation key, but not the secret key.
Therefore, an honest-but-curious server cannot distinguish encryptions of two equal-length candidate updates, except with negligible probability.
As a result, \acro provides privacy protection of training data w.r.t. the threat model described in Section~\ref{subsec:threat}.


\textbf{Soundness.} Under ideal CKKS parameters, the aggregation performed by the server produces an encrypted approximation of the aggregate model $\Theta$ defined in Equation~\ref{eq:weighted-avg}. 
Clients can therefore recover an approximation of $\Theta$ as in a standard PFL setup, i.e., the one without HE. 
However, because CKKS introduces noise during encryption, decrypted results may deviate slightly from the exact plaintext values. The extent of this deviation depends on the choice of CKKS parameters in \acro, which will be explored next.

\textbf{CKKS parameters.}
In CKKS, 4 parameters are critical to decryption precision and thus the soundness of \acro:
$N$, $L$, $(q_0, \ldots, q_{L+1})$, and $\Delta$.
The parameter $N$ also directly determines the security level of CKKS. 
To achieve the standard 128-bit security level, we follow the recommended setting of $N = 8192$~\cite{albrecht2022homomorphic,benaissa2021tenseal,furka2023guidelines}.
The multiplicative depth $L$ specifies how many homomorphic multiplications can be supported. In \acro, we only need one multiplication (Line 9 of Algorithm~\ref{alg:acro}), so $L = 1$. 
This results in three ciphertext moduli: $(q_0, q_1, q_2)$, where we refer $q_0$ and $q_2$ as the outer prime and $q_1$ as the inner prime.

Following standard practice~\cite{qiu2022privacy,rahulamathavan2022privacy,pan2024fedshe}, we fix $\Delta = q_1$, $q_0 = q_2$ with $q_0 > q_1$, and require $\log_2(Q_1) = \log_2(q_0) + \log_2(q_1) + \log_2(q_2) < 218$ to maintain 128-bit security~\cite{albrecht2022homomorphic,bossuat2024security,pan2024fedshe}.
As a result, we have the following constraints for selecting CKKS parameters in \acro:

\begin{equation}\label{eq:fixed}
    N=8192,\ L=1,\ q_0=q_2,\ q_1=\Delta
\end{equation}

\begin{equation}\label{eq:ineq}
    q_0>q_1,\ \log_2(q_0) + \log_2(q_1) + \log_2(q_2)<218
\end{equation}

Given these constraints, to use \acro, we only need to select values for the outer prime ($q_0$) and the inner prime ($q_1$) that satisfy the inequality in Equation~\ref{eq:ineq}. Once chosen, the full parameter set can be populated using Equation~\ref{eq:fixed}.

In CKKS, decryption precision is largely influenced by inner and outer primes.
A larger $\Delta$ ($=q_1$) retains more precision in the decimal part, while the decryption precision of the integer part depends on the gap between $q_0$ and $q_1$.
Moreover, $\Delta$ determines how plaintext values are scaled relative to the noise. If $q_1$ (and thus $\Delta$) is too small, the scaling becomes insufficient to reliably distinguish plaintext values from the accumulated noise during decryption.
However, increasing either $q_0$ or $q_1$ enlarges the total modulus $Q_L$, which in turn requires more bits to represent ciphertexts.
This results in larger ciphertext sizes, higher bandwidth consumption and longer computation times.

Consequently, CKKS introduces an inherent trade-off: improving numerical precision comes at the cost of increased computational and communication overheads. 
To balance these competing factors, one must carefully select CKKS parameters that minimize these overheads while maintaining sufficient accuracy for the downstream machine learning task.
In the following sections, we empirically explore this trade-off to identify practical parameter choices. 
To the best of our knowledge, this issue has not been systematically investigated in prior work, including studies of CKKS in traditional FL.

\section{Experiments}
\label{sec:experiments}

\subsection{Setup}

\textbf{Implementation.}
We implement \acro in Python using the Flower federated learning framework~\cite{beutel2020flower}.
Flower provides a standard FL pipeline based on two functions: \texttt{fit} and \texttt{aggregate}.
The \texttt{fit} function is executed by a Flower client and defines the logic for local training on its own dataset. 
It returns the updated model along with the number of trained samples, which Flower automatically forwards to the server.
Upon receiving results from clients, Flower invokes \texttt{aggregate} on the server, which implements the aggregation logic producing an aggregate result. 
This result is then sent back to the clients for the next training round.

To implement \acro in Flower, we modify the \texttt{fit} function to implement Lines 19-24 of Algorithm~\ref{alg:acro} and the server’s \texttt{aggregate} function to implement Lines 7–14.
We use the Tenseal library~\cite{benaissa2021tenseal} to provide Python APIs for CKKS operations in our implementation.

\textbf{Hardware.}
We evaluate our \acro implementation by simulating both clients and server on the same machine: an i7-11700K desktop with 32 GB of RAM and NVIDIA GeForce RTX 4060 GPU. 
In total, all experiments take around $\approx$ 336 GPU hours and $\approx$  720 hours.

\begin{table*}[t]
\centering
\caption{Summary of datasets used in our experiments.}
\label{tab:dataset-statistics}
\begin{tabular}{ccccccc}
\toprule
\textbf{Dataset} & \textbf{Dataset Type} & \textbf{\# Samples} & \textbf{\# Classes} & \textbf{Task} & \textbf{Model} & \textbf{\# Parameters} \\
\midrule

FEMNIST &
Grayscale images &
805,263 &
62 &
Handwritten classification &
2D CNN &
\ignore{
\begin{tabular}[c]{@{}l@{}}
Conv(1$\to$32, 5$\times$5) + ReLU + MaxPool \\
Conv(32$\to$64, 5$\times$5) + ReLU + MaxPool \\
FC(3136$\to$2048) + ReLU \\
FC(2048$\to$62)
\end{tabular} &
}
6.60M \\

CelebA &
RGB images &
202,599 &
2 &
Smiling detection &
2D CNN&
\ignore{
\begin{tabular}[c]{@{}l@{}}
Conv(3$\to$32, 3$\times$3) + ReLU + MaxPool $\times$4 \\
FC(800$\to$2)
\end{tabular} &
}
30.2K \\

Sentiment140 &
Tweet texts &
1,600,498 &
2 &
Sentiment analysis &
1D CNN &
\ignore{
\begin{tabular}[c]{@{}l@{}}
Embedding(5268$\to$20) \\
Conv1D(20$\to$32, k=3) + ReLU + MaxPool \\
Conv1D(32$\to$64, k=3) + ReLU + MaxPool \\
Conv1D(64$\to$128, k=3) + ReLU \\
GlobalMaxPool + FC(128$\to$2)
\end{tabular} &
}
8M \\
\bottomrule
\end{tabular}%
\end{table*}

\textbf{Datasets and Models.} 
We use three datasets from the LEAF federated learning benchmark~\cite{caldas2018leaf} in our experiments: FEMNIST, CelebA, and Sentiment140.
The datasets and the corresponding models used for training are summarized in Table~\ref{tab:dataset-statistics} and described in detail below:

\begin{enumerate}
    \item \emph{FEMNIST} represents an image classification task, consisting of 28×28 grayscale images of handwritten digits and both lowercase and uppercase letters (62 classes in total).
    It contains a total of 805,263 samples.
    To evaluate \acro, we adopt the CNN model implementation provided in LEAF, which consists of five layers: two convolutional layers (with 32 and 64 filters of size 5×5), one fully connected hidden layer with 2048 neurons, and a final output layer with a softmax activation over 62 classes.
    
    \item \emph{CelebA} is a large-scale face attribute classification dataset containing 202,599 RGB images of celebrity faces annotated with 40 binary attributes.
    Following the LEAF benchmark, we focus on the binary classification task for the \textit{smiling} attribute.
    We use the 2D CNN model provided by LEAF. 
    The model has four convolutional layers and one fully connected layer followed by a softmax layer for binary classification. 
    
    \item \emph{Sentiment140} is a LEAF dataset for sentiment classification.
    It contains 1,600,498 tweets collected from 660,120 users, where each tweet is labeled with either positive or negative sentiment.
    The original LEAF implementation uses a logistic regression model for sentiment analysis.
    However, this model is not suitable for \fedper because it consists of only a single layer and therefore cannot be separated into the base and personalized layers required by \fedper.
    Motivated by~\cite{kim2014convolutional}, we instead adopt a 1D CNN model consisting of three convolutional layers with 32, 64, and 128 filters, followed by a fully connected layer that produces the final sentiment prediction.
    \ignore{
    the model used for this dataset is logistic regression,
    which only has one linear layer. Because it only has a
    single layer, it is difficult to divide the model into
    shared layers and personalized layers. However, several
    personalized federated learning (PFL) algorithms require
    the model to be separated in this way.
    Therefore, we use a deeper model based on a 1D CNN.
    This type of model makes it easier to separate the model
    into shared feature extraction layers and personalized
    classification layers for each client.
    First, each tweet is converted into a vector using an
    embedding layer. Then the sequence is passed through
    three 1D convolution layers with 32, 64, and 128 filters.
    All convolution layers use a kernel size of 3 and ReLU
    activation. Max pooling is applied after the first and
    second convolution layers to reduce the sequence size.
    After the third convolution layer, global max pooling is
    used to obtain a fixed-length feature vector. Finally,
    this vector is passed to a fully connected layer with softmax activation to
    produce the final sentiment prediction.
    }
\end{enumerate}

\ignore{
We use two datasets, FEMNIST and CelebA, from the LEAF federated learning benchmark~\cite{caldas2018leaf}. 

\begin{itemize}
    \item FEMNIST represents an image classification task, consisting of 28×28 grayscale images of handwritten digits and both lowercase and uppercase letters (62 classes in total). 
    It contains a total of 805,263 samples. 
    To evaluate \acro, we adopt the CNN implementation provided in LEAF, which consists of XXXX (five?) layers with a total of XXX neurons, with the last layer being a softmax activation layer.

    \item CelebA is a face attribute classification dataset containing 202,599 celebrity images annotated with 40 binary attributes. 
    Following LEAF, we focus on the binary classification task of the ``smiling'' attribute. 
    We use the CNN model provided in LEAF, consisting of XXX layers with a total of XXX neurons, with the final layer being a softmax activation layer.

    \ignore{
    \item Sentiment140 contains 1,600,000 tweets annotated with binary sentiment labels (positive or negative), making it a text classification task.
    Following~\cite{}, we adopt a CNN-based model with five layers and a total of 50 neurons, with the final layer being a softmax activation layer.
    }
\end{itemize}
}

\textbf{PFL Algorithms and Parameters.}
We evaluate \acro on the dataset using three underlying PFL algorithms: \finetune, \fedper and \ditto. 
Specifically, for \fedper, we use the last layer (classification head/fully connected layer+softmax activation) as the personalized layer, while the remaining are the base layers shared with the server during training. 
In \ditto, we set the hyperparameter $\lambda$ (see Section~\ref{subsec:pfl}) to 0.1, following previous work~\cite{li2021ditto}.

For all algorithms, we use the following setup: a learning rate of 0.01, 20 communication rounds, 3 PFL clients (for cross-silo settings), the entire dataset divided into 100 partitions where each client holds one partition, and an 80/20 train-test split. 
For the FEMNIST dataset, data is partitioned in a non-IID manner using the Dirichlet partitioning method with $\alpha = 0.1$.
For the CelebA and Sentiment140 datasets, the tasks are binary classification, which makes it difficult to reliably construct non-IID partitions based on prediction labels.
Therefore, we partition these two datasets in an IID manner instead.

\textbf{Baseline and Metrics.} 
In this work, we consider the baseline to be the respective unmodified (unencrypted) PFL algorithm. 
Our goal is to explore \acro trade-off between communication/computational overhead and precision. Accordingly, we report three performance metrics in our experiments:

\begin{itemize}
    \item \textbf{Communication} (Section~\ref{subsec:communication-overhead}) measured by the total number of bytes each client transmits and receives in each training round.
    \item \textbf{Computation} (Section~\ref{subsec:computation-overhead}) as the total time (CPU + GPU) performed on each client to complete each training round.
    \item \textbf{Precision}(Section~\ref{subsec:precision-overhead}), where we measure at two levels: parameter and model level.
\end{itemize}

\ignore{
at three levels of granularity:

\begin{itemize}
    \item \textbf{Parameter-level:} the average L1 error between the final personalized parameters from \acro and those from the baseline across all clients.

    \item \textbf{Loss-level:} the average L1 error between the final training loss in \acro and that of the baseline across all clients.

    \item \textbf{Accuracy-level:} the average L1 error of test accuracy between \acro and the baseline across all clients.
\end{itemize}

This allows us to quantify the impact of precision loss introduced by the CKKS scheme at different stages of the learning process.
}

\subsection{Communication}\label{subsec:communication-overhead}
\ignore{
\begin{table*}[t]
\caption{Experiment Setup}
\label{tab:experiment-table}
\centering
\begin{tabular}{|c|c|c|c|}
\hline
Experiment & Parameters Being Tested & Range/Setup & Metrics Measured \\ \hline

Exp 1.1 &
\begin{tabular}[c]{@{}c@{}}Middle prime\end{tabular} &
18--40 &
Accuracy, Runtime, Bandwidth \\ \hline

Exp 1.2 &
\begin{tabular}[c]{@{}c@{}}Outer prime\end{tabular} &
18--40 &
Runtime, Bandwidth \\ \hline

Exp 1.3 &
Float precision &
Float32&
Accuracy, Runtime \\ \hline

Exp 2 &
Performance on HE vs non-HE &
HE enabled vs HE disabled &
Accuracy, Bandwidth, Runtime \\ \hline

\end{tabular}%
\end{table*}
}

\begin{figure*}[t]
\centering

\begin{subfigure}[b]{0.3\textwidth}
    \centering
    \includegraphics[width=\textwidth]{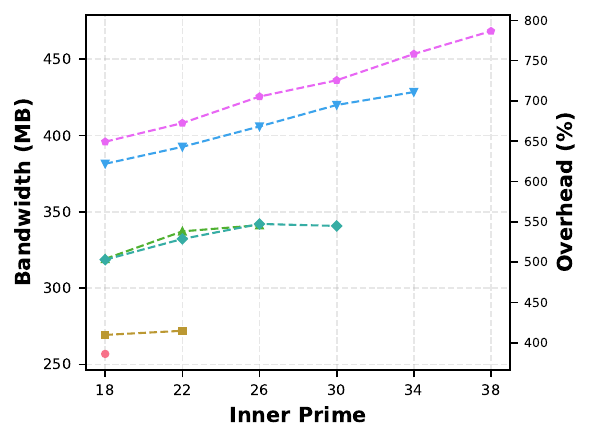}
    \caption{\finetune\ -- FEMNIST}
    \label{fig:ft_bandwidth_femnist}
\end{subfigure}
\hfill
\begin{subfigure}[b]{0.3\textwidth}
    \centering
    \includegraphics[width=\textwidth]{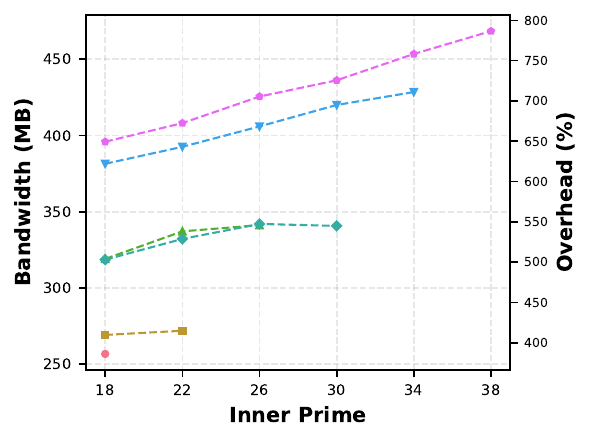}
    \caption{\ditto\ -- FEMNIST}
    \label{fig:ditto_bandwidth_femnist}
\end{subfigure}
\hfill
\begin{subfigure}[b]{0.3\textwidth}
    \centering
    \includegraphics[width=\textwidth]{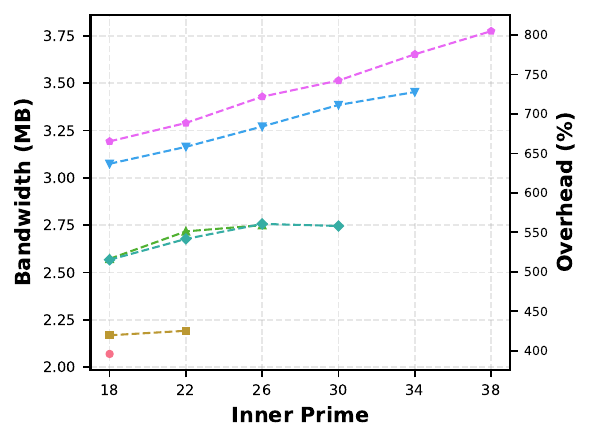}
    \caption{\fedper\ -- FEMNIST}
    \label{fig:fedper_bandwidth_femnist}
\end{subfigure}

\vspace{0.5cm}

\begin{subfigure}[b]{0.3\textwidth}
    \centering
    \includegraphics[width=\textwidth]{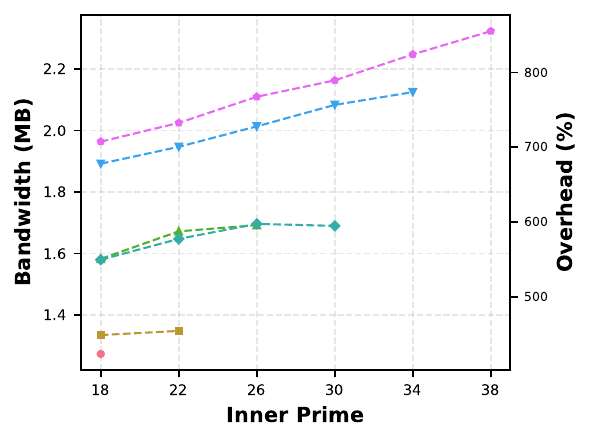}
    \caption{\finetune\ -- CelebA}
    \label{fig:ft_bandwidth_celebA}
\end{subfigure}
\hfill
\begin{subfigure}[b]{0.3\textwidth}
    \centering
    \includegraphics[width=\textwidth]{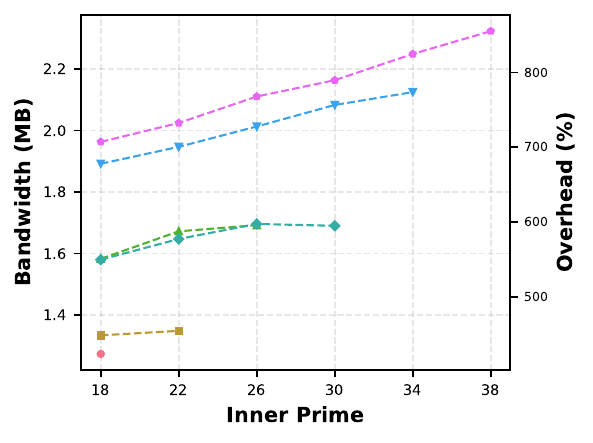}
    \caption{\ditto\ -- CelebA}
    \label{fig:ditto_bandwidth_celebA}
\end{subfigure}
\hfill
\begin{subfigure}[b]{0.3\textwidth}
    \centering
    \includegraphics[width=\textwidth]{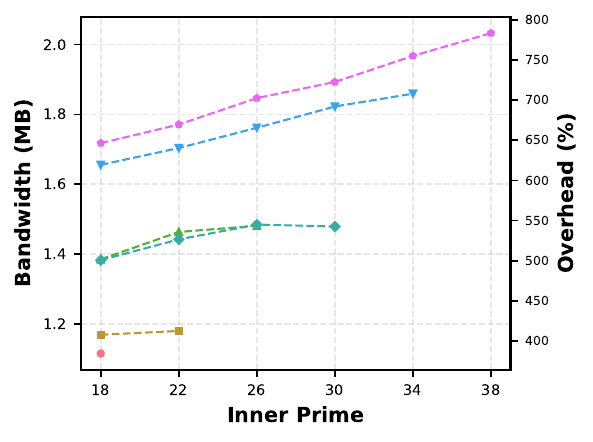}
    \caption{\fedper\ -- CelebA}
    \label{fig:fedper_bandwidth_celebA}
\end{subfigure}

\vspace{0.5cm}

\begin{subfigure}[b]{0.3\textwidth}
    \centering
    \includegraphics[width=\textwidth]{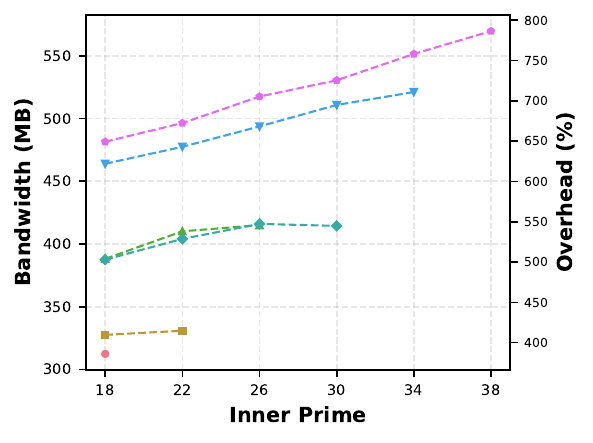}
    \caption{\finetune\ -- Sentiment140}
    \label{fig:ft_bandwidth_sentiment}
\end{subfigure}
\hfill
\begin{subfigure}[b]{0.3\textwidth}
    \centering
    \includegraphics[width=\textwidth]{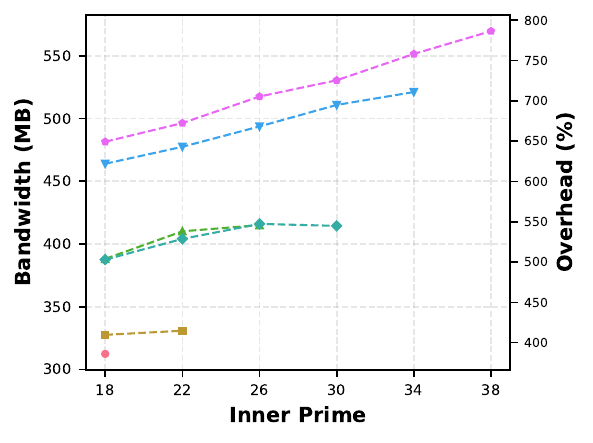}
    \caption{\ditto\ -- Sentiment140}
    \label{fig:ditto_bandwidth_sentiment}
\end{subfigure}
\hfill
\begin{subfigure}[b]{0.3\textwidth}
    \centering
    \includegraphics[width=\textwidth]{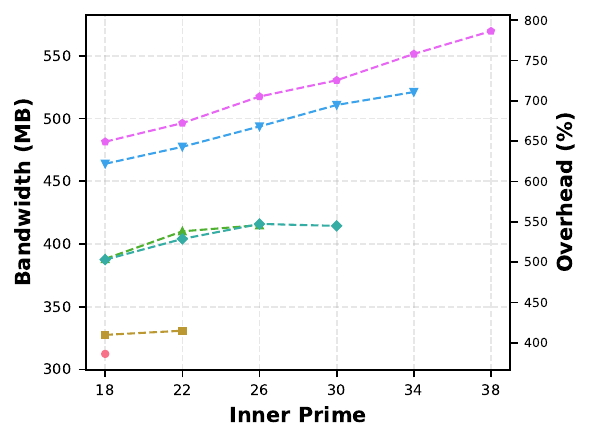}
    \caption{\fedper\ -- Sentiment140}
    \label{fig:fedper_bandwidth_sentiment}
\end{subfigure}

\caption{\acro's bandwidth usage (absolute value on the left y-axis) and its overhead w.r.t. the baseline (percentage on the right y-axis). 
Each line corresponds to a different \emph{outer} prime value:
\textcolor{outer20}{\textbf{$\bullet$=20}},
\textcolor{outer24}{\textbf{$\blacksquare$=24}},
\textcolor{outer28}{\textbf{$\blacktriangle$=28}},
\textcolor{outer32}{\textbf{$\blacklozenge$=32}},
\textcolor{outer36}{\textbf{$\blacktriangledown$=36}},
\textcolor{outer40}{\textbf{\blackpentagon=40}};  
\emph{inner} prime bits are varied along in the x-axis.
Results are averaged over 3 seeds; standard deviation is $<0.01$ and omitted.
}
\label{fig:bandwidth_trends}
\end{figure*}

Figure~\ref{fig:bandwidth_trends} illustrates the communication cost of \acro and its overhead relative to the baseline. 
The percentage communication overhead exhibits a consistent trend across all datasets and PFL algorithms: both inner and outer primes directly influence the overhead, with larger values leading to higher communication cost.
For example, with a fixed 40-bit outer prime, increasing the inner prime from $18$ to $38$ results in at least an additional 100\% communication overhead. 
Also, for a fixed inner prime of $22$, increasing the outer prime from the lowest value ($24$) to the largest value ($40$) leads to an increase of at least 250\%.

The overall communication cost follows a similar trend to the \% overhead, where the choice of inner/outer prime bits influence actual bandwidth usage.
For FEMNIST, \finetune and \ditto exhibit similar bandwidth consumption ($\approx$257--468\,MB), whereas \fedper reduces the communication cost to less than 4\,MB. 
This reduction is expected, as \fedper decouples personalized parameters from shared parameters and only transmits a smaller subset of shared parameters (i.e., the base layer of the 2D CNN used in FEMNIST which contributes to 79\% of the overall model parameters), thereby significantly lowering the amount of encrypted data exchanged.

A similar trend is observed in Sentiment140, where \finetune and \ditto again consume comparable bandwidth to their FEMNIST counterparts. However, \fedper incurs higher bandwidth usage in this case because the shared component corresponds to the base layers of a 1D CNN, which constitutes a larger fraction ($\approx$ 99.98\%) of the total model parameters.

Finally, due to its smaller model size, CelebA results in the lowest bandwidth consumption (less than 2.5\,MB across all settings). However, when compared to the baseline, its relative overhead remains comparable to that of other datasets.

\ignore{


For the overall trend, the inner prime 18 has consumed the least amount of bandwidth. For the FEMNIST dataset, the normal no-HE version is about 26--53 MB, but when HE is utilized it increases to around 257--468 MB. This means the bandwidth becomes about 5 to 9 times larger for \finetune and \ditto. 

For \fedper, since the original model is very small in the baseline model because we only send the head, which is the public part, to the server, when compared with the HE version the results have shown that it increases the size around 9 to 17 times larger than the baseline. 

For the Sentiment140 dataset, the overall baseline across the three algorithms consumes the same size which is around 64.27 MB. After utilizing HE, the bandwidth consumption increases around 5 to 9 times compared to the baseline, which shares the same trend as the other datasets. 

The CelebA dataset shows the most interesting trend compared to the other two datasets. Its original size, or the baseline model size, is the smallest compared to the other two datasets since it is a binary classification task which only consumes around 0.23--0.24 MB. However, after applying HE it increases to around 1.1--2.3 MB, which is still about 5--10 times larger in terms of relative change. However, in absolute terms it is still very small and not a serious problem.
}

\begin{Takeaway}{Impact on Communication}
xIn \acro, the choice of both inner and outer CKKS primes directly affects communication overhead  over the unencrypted baseline, with larger values leading to higher overhead. In terms of absolute cost, the choice of PFL algorithm and model architecture further influences the overall bandwidth usage.
\end{Takeaway}

%

\begin{figure*}[t]
\centering

\begin{subfigure}[b]{0.3\textwidth}
    \centering
    \includegraphics[width=\textwidth]{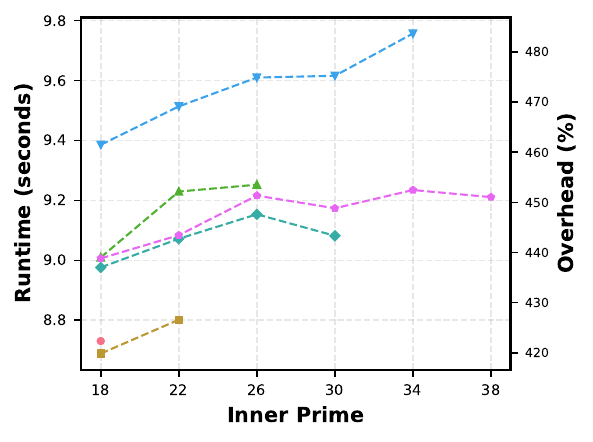}
    \caption{\finetune\ -- FEMNIST}
    \label{fig:ft_runtime_femnist}
\end{subfigure}
\hfill
\begin{subfigure}[b]{0.3\textwidth}
    \centering
    \includegraphics[width=\textwidth]{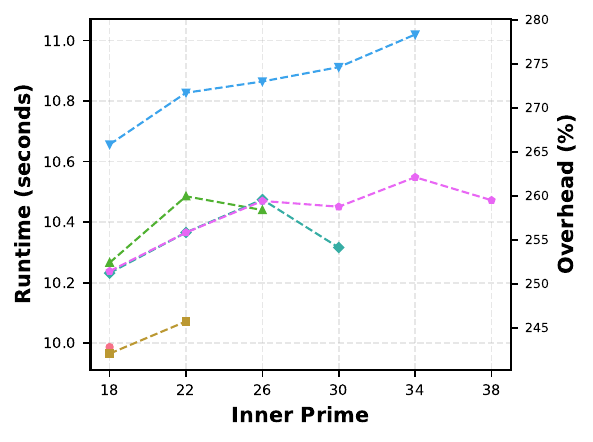}
    \caption{\ditto\ -- FEMNIST}
    \label{fig:ditto_runtime_femnist}
\end{subfigure}
\hfill
\begin{subfigure}[b]{0.3\textwidth}
    \centering
    \includegraphics[width=\textwidth]{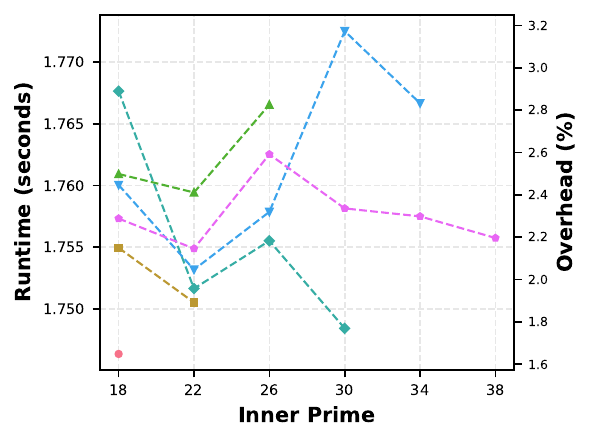}
    \caption{\fedper\ -- FEMNIST}
    \label{fig:fedper_runtime_femnist}
\end{subfigure}

\vspace{0.5cm}
\begin{subfigure}[b]{0.3\textwidth}
    \centering
    \includegraphics[width=\textwidth]{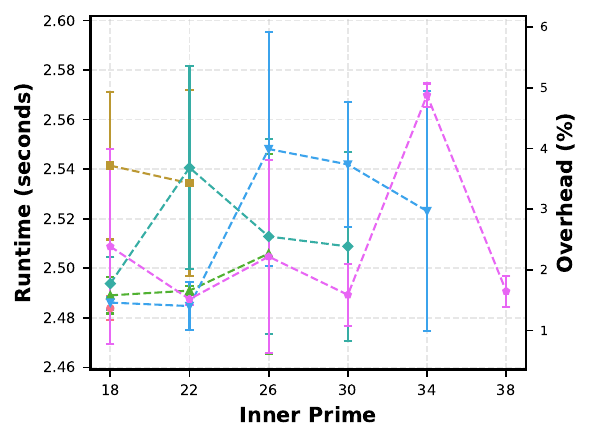}
    \caption{\finetune\ -- CelebA}
    \label{fig:ft_runtime_celebA}
\end{subfigure}
\hfill
\begin{subfigure}[b]{0.3\textwidth}
    \centering
    \includegraphics[width=\textwidth]{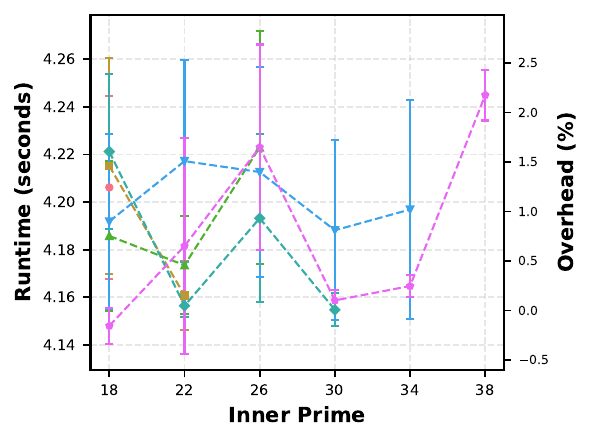}
    \caption{\ditto\ -- CelebA}
    \label{fig:ditto_runtime_celebA}
\end{subfigure}
\hfill
\begin{subfigure}[b]{0.3\textwidth}
    \centering
    \includegraphics[width=\textwidth]{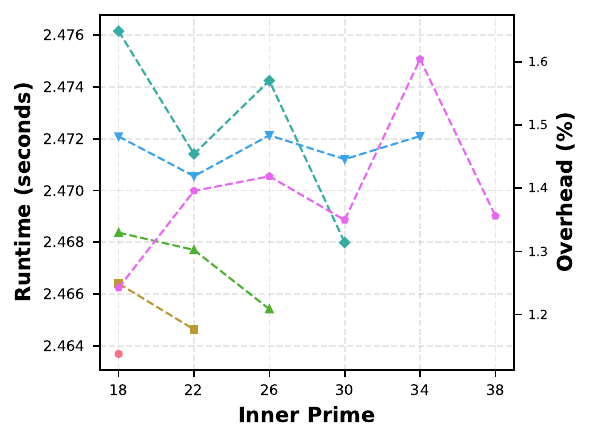}
    \caption{\fedper\ -- CelebA}
    \label{fig:fedper_runtime_celebA}
\end{subfigure}

\vspace{0.5cm}
\begin{subfigure}[b]{0.3\textwidth}
    \centering
    \includegraphics[width=\textwidth]{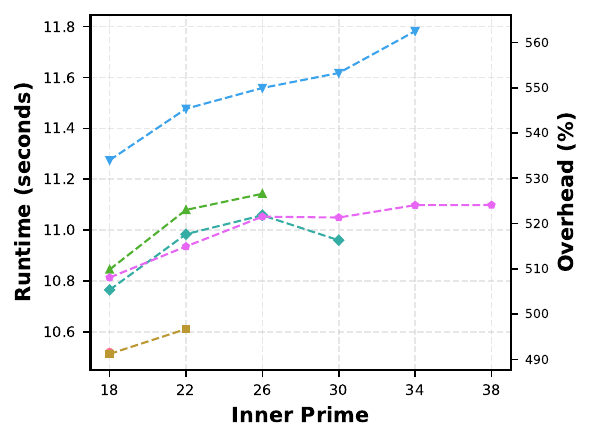}
    \caption{\finetune\ -- Sentiment140}
    \label{fig:ft_runtime_sentiment}
\end{subfigure}
\hfill
\begin{subfigure}[b]{0.3\textwidth}
    \centering
    \includegraphics[width=\textwidth]{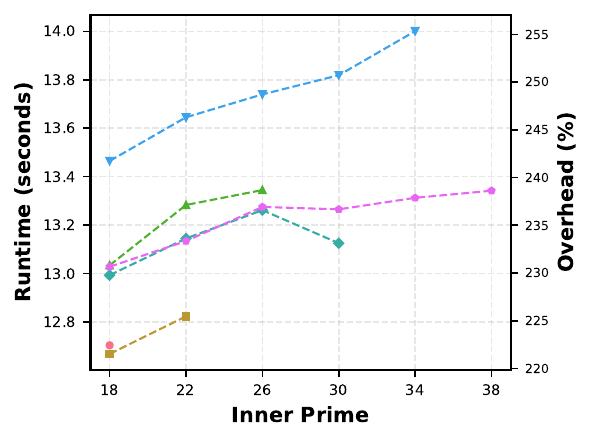}
    \caption{\ditto\ -- Sentiment140}
    \label{fig:ditto_runtime_sentiment}
\end{subfigure}
\hfill
\begin{subfigure}[b]{0.3\textwidth}
    \centering
    \includegraphics[width=\textwidth]{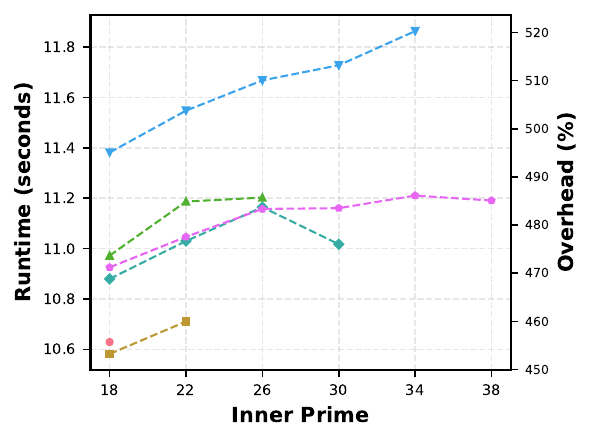}
    \caption{\fedper\ -- Sentiment140}
    \label{fig:fedper_runtime_sentiment}
\end{subfigure}

\caption{Runtime usage of \acro\ with various inner and outer prime values (in bits). Each line corresponds to a different \emph{outer} prime value:
\textcolor{outer20}{\textbf{$\bullet$=20}},
\textcolor{outer24}{\textbf{$\blacksquare$=24}},
\textcolor{outer28}{\textbf{$\blacktriangle$=28}},
\textcolor{outer32}{\textbf{$\blacklozenge$=32}},
\textcolor{outer36}{\textbf{$\blacktriangledown$=36}},
\textcolor{outer40}{\textbf{\blackpentagon=40}}; 
}
\label{fig:runtime}
\end{figure*}

\begin{table*}[t]
\centering
\begin{tabular}{llrrrrr}
\toprule
\textbf{Algorithm} & \textbf{Dataset}
  & \textbf{(i) Decrypt (ms)}
  & \textbf{(ii) Training (ms)}
  & \textbf{(iii) Encrypt (ms)}
  & \textbf{CKKS Total (ms)}
  & \textbf{Total Fit (ms)} \\
\midrule
\multirow{3}{*}{\finetune}
  & FEMNIST    & 899.6$\pm$1.6 (9.8\%)   & 1259.7$\pm$4.1 (13.7\%)  & 6645.5$\pm$27.1 (72.2\%) & 7545.1$\pm$27.1 (82.0\%) & 9204.3$\pm$28.7  \\
  & \cellcolor{gray!20}CelebA     & \cellcolor{gray!20}5.6$\pm$0.1 (0.22\%)    & \cellcolor{gray!20}2036.7$\pm$5.7 (81.7\%)  & \cellcolor{gray!20}33.8$\pm$0.7 (1.36\%)    & \cellcolor{gray!20}39.4$\pm$0.8 (1.58\%)    & \cellcolor{gray!20}2490.7$\pm$6.2   \\
  & Sentiment  & 1118.2$\pm$9.6 (10.1\%) & 1051.4$\pm$7.1 (9.47\%)  & 8212.2$\pm$3.4 (74.0\%)  & 9330.5$\pm$11.5 (84.1\%) & 11098.1$\pm$4.7  \\
\midrule
\multirow{3}{*}{\ditto}
  & FEMNIST    & 908.7$\pm$7.7 (8.7\%)    & 2499.6$\pm$13.2 (23.9\%) & 6657.8$\pm$17.4 (63.6\%) & 7566.6$\pm$25.0 (72.3\%) & 10472.5$\pm$32.0 \\
  & \cellcolor{gray!20}CelebA     & \cellcolor{gray!20}5.5$\pm$0.0 (0.13\%)     & \cellcolor{gray!20}3720.2$\pm$12.0 (87.6\%) & \cellcolor{gray!20}32.8$\pm$0.3 (0.77\%)    & \cellcolor{gray!20}38.3$\pm$0.3 (0.90\%)    & \cellcolor{gray!20}4244.9$\pm$10.7  \\
  & Sentiment  & 1118.9$\pm$10.2 (8.4\%)  & 3278.6$\pm$16.1 (24.6\%) & 8252.0$\pm$12.0 (61.8\%) & 9370.9$\pm$10.1 (70.3\%) & 13342.0$\pm$12.1 \\
\midrule
\multirow{3}{*}{\fedper}
  & \cellcolor{gray!20}FEMNIST    & \cellcolor{gray!20}7.8$\pm$0.0 (0.44\%)     & \cellcolor{gray!20}1263.9$\pm$2.8 (71.6\%)  & \cellcolor{gray!20}52.8$\pm$0.8 (3.0\%)     & \cellcolor{gray!20}60.7$\pm$0.8 (3.4\%)     & \cellcolor{gray!20}1764.4$\pm$13.2  \\
  & \cellcolor{gray!20}CelebA     & \cellcolor{gray!20}4.6$\pm$0.0 (0.19\%)     & \cellcolor{gray!20}2024.2$\pm$3.5 (82.0\%)  & \cellcolor{gray!20}30.0$\pm$0.4 (1.2\%)     & \cellcolor{gray!20}34.7$\pm$0.4 (1.41\%)    & \cellcolor{gray!20}2469.3$\pm$3.1   \\
  & Sentiment  & 1123.5$\pm$5.9 (10.6\%)  & 1199.7$\pm$2.9 (11.5\%)  & 8156.4$\pm$12.3 (77.8\%) & 9279.9$\pm$7.6 (88.5\%)  & 11191.0$\pm$6.6  \\
\bottomrule
\end{tabular}%
\caption{Computation time breakdown per federated round using CKKS homomorphic encryption (prime config [40,38,40], mean$\pm$std across 3 seeds). All times in milliseconds (ms). (i) CKKS decryption, (ii) local training, (iii) CKKS encryption. \colorbox{gray!20}{Gray} rows indicate settings where training (ii) dominates over CKKS operations (i)+(iii), causing no clear trend.}
\label{tab:computation_time}
\end{table*}
\subsection{Computation}\label{subsec:computation-overhead}
We define the computation cost of \acro as the per-round execution time of Algorithm~\ref{alg:acro} on each PFL client. Figure~\ref{fig:runtime} reports both the absolute runtime and the corresponding overhead (in \%) relative to the baseline under varying \acro's CKKS inner and outer prime bit sizes.
We observe that only five out of the nine settings exhibit a clear trend: similar to communication overhead, increasing the inner and outer prime values leads to higher computation time. These settings include all three PFL algorithms on the Sentiment140 dataset (Figure~\ref{fig:ft_runtime_sentiment}--\ref{fig:fedper_runtime_sentiment}) and two PFL algorithms on the FEMNIST dataset (Figure~\ref{fig:ft_runtime_femnist} and Figure~\ref{fig:ditto_runtime_femnist}).

Among these five settings, the two that employ \ditto incur a computation overhead of approximately $2.2$-$2.8\times$ over the baseline, while the remaining three exhibit higher overheads of $4.2$--$5.6\times$. 
The remaining settings do not show a clear relationship between computation cost and the CKKS prime parameters.

For settings that exhibit a clear trend, each PFL round takes around 10 seconds, with low variance (standard deviation $< 0.01$). 
In contrast, settings without a clear trend complete significantly faster (2-4 seconds) and may exhibit higher runtime variance; however, the absolute difference between the maximum and minimum runtime remains small, typically within the first or second decimal place (e.g., 2.464 vs.\ 2.476 seconds in Figure~\ref{fig:fedper_runtime_celebA}).

To better understand this behavior, we perform a runtime microbenchmark that breaks the computation cost into finer-grained operations. 
Based on Lines~14-24 of Algorithm~\ref{alg:acro}, we consider three main components: (i) CKKS decryption (Lines~17-18), (ii) local training (Line~19), and (iii) CKKS encryption (Lines~20-21). 
Among these, (i) and (iii) constitute the sources of \acro computation overhead compared to the unencrypted baseline.

Table~\ref{tab:computation_time} presents the runtime breakdown. For the four settings that do not exhibit a clear trend in Figure~\ref{fig:runtime}, the total runtime is dominated by (ii) local training, which varies significantly across different runs/seeds. 
In contrast, the runtime of (i) and (iii) is much smaller (0.90-3.4\% of the overall runtime) and stable. 
As a result, incorporating \acro (i.e., adding (i) and (iii)) does not significantly affect the overall computation time, with the overhead remaining below 3.4\%.
Notably, three of these four settings correspond to CelebA, where the model size is relatively small, leading to faster execution of (i)/(iii) compared to (ii). 
The remaining setting corresponds to the \fedper--FEMNIST combination. In this case, the base layers of FEMNIST constitute only about 3.4\% of the total model parameters; thus, under \fedper, only this small subset is encrypted and decrypted, resulting in minimal overhead from (i) and (iii).

From Table~\ref{tab:computation_time}, the remaining five settings that exhibit a clear relationship between CKKS inner/outer prime sizes and computation cost are dominated by CKKS encryption (iii) and CKKS decryption (i), with encryption contributing the larger share of the overhead around $7.3$-$7.4\times$, which is consistent with the finding from prior work~\cite{jiang2022fhebench}.

\begin{table}[t]
\centering
\caption{Encryption and serialization time (ms) across outer prime bits, measured on the FEMNIST model.}
\label{tab:ckks_enc_time}
\begin{tabular}{cc}
\toprule
\textbf{Outer Prime (bits)} & \textbf{Encryption + Serialization Time (ms)} \\
\midrule
\rowcolor{gray!25} 20 & 6531.2 \\
\rowcolor{gray!25} 22 & 6547.0 \\
24 & 6501.0 \\
\rowcolor{gray!25} 26 & 6664.7 \\
\rowcolor{gray!25} 28 & 6742.3 \\
\rowcolor{gray!25} 30 & 6737.0 \\
32 & 6737.9 \\
\rowcolor{gray!25} 34 & 6979.2 \\
\rowcolor{gray!25} 36 & 7131.1 \\
\rowcolor{gray!25} 38 & 7067.9 \\
40 & 6731.8 \\
\bottomrule
\end{tabular}
\end{table}
We also observe a non-monotonic behavior in computation time w.r.t. the outer prime size. In particular, certain configurations (e.g., outer prime = 36 bits) exhibit slightly higher runtime compared to larger sizes (e.g., outer prime = 40 bits), which appears counterintuitive.
Upon further inspection, we found that this behavior stems from the serialization mechanism in the underlying TenSEAL library.
Prime sizes that are not byte-aligned introduce additional padding overhead during (de)serialization.
To validate this explanation, we benchmark CKKS serialization and encryption runtime while varying the outer prime size from 20 to 40 bits.
Results in Table~\ref{tab:ckks_enc_time} support this claim, showing that byte-aligned prime sizes are generally faster than non-aligned ones, even when using larger bit lengths.



\begin{Takeaway}{Impact on Computation}
xIn \acro, the choice of CKKS inner and outer prime sizes significantly impacts computation overhead, with larger values leading to higher runtime. 
However, this effect is only evident when CKKS encryption/decryption operations dominate the computation; otherwise, local training masks this trend. 
\end{Takeaway}

\begin{figure*}[t]
\centering

\begin{subfigure}[b]{0.3\textwidth}
    \centering
    \includegraphics[width=\textwidth]{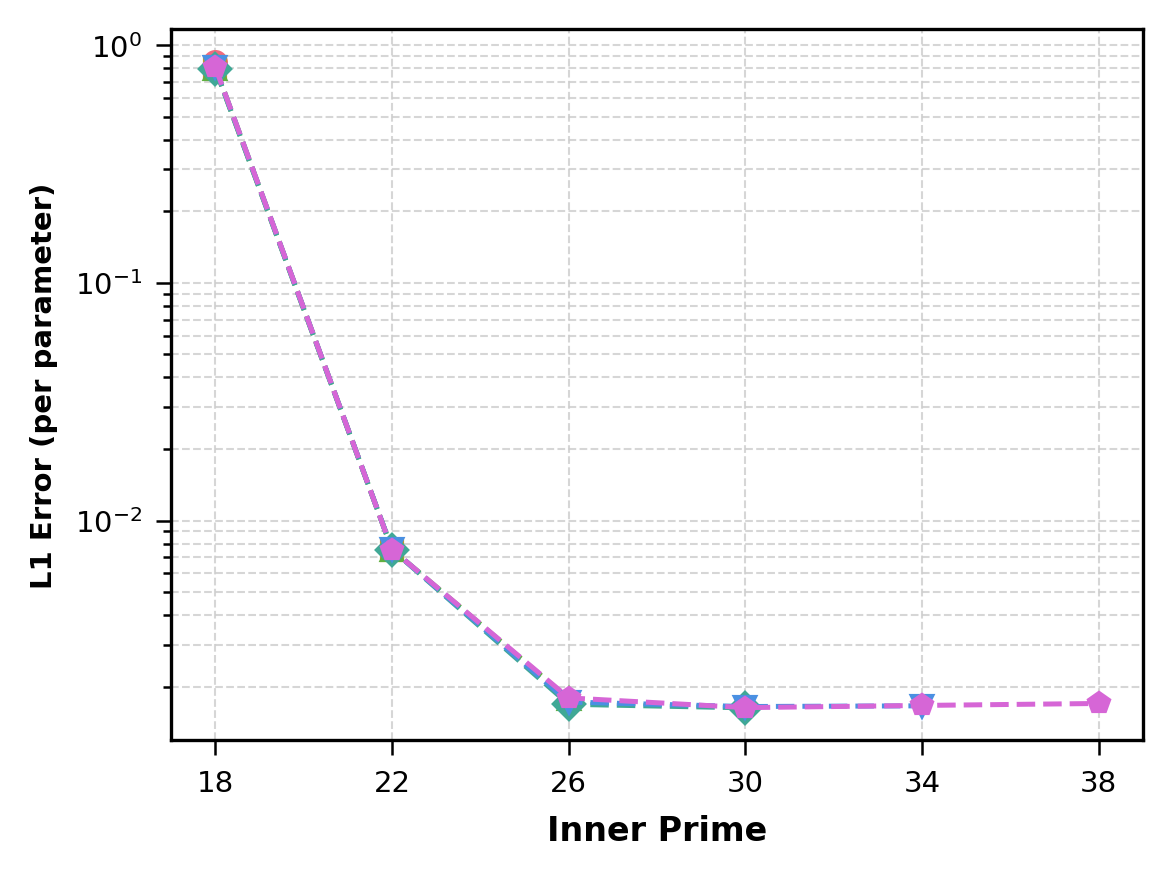}
    \caption{\finetune\ -- FEMNIST}
    \label{fig:ft_l1_femnist}
\end{subfigure}
\hfill
\begin{subfigure}[b]{0.3\textwidth}
    \centering
    \includegraphics[width=\textwidth]{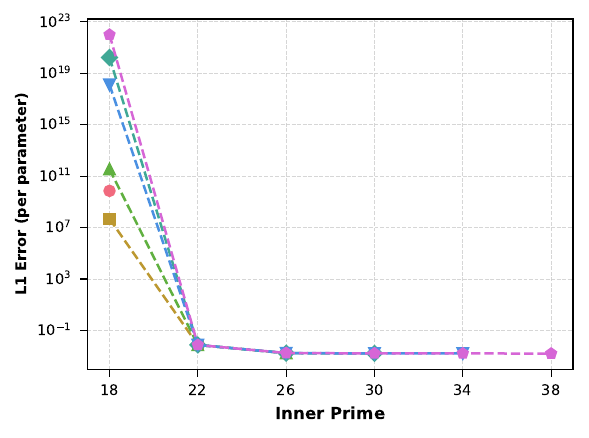}
    \caption{\ditto\ -- FEMNIST}
    \label{fig:ditto_l1_femnist}
\end{subfigure}
\hfill
\begin{subfigure}[b]{0.3\textwidth}
    \centering
    \includegraphics[width=\textwidth]{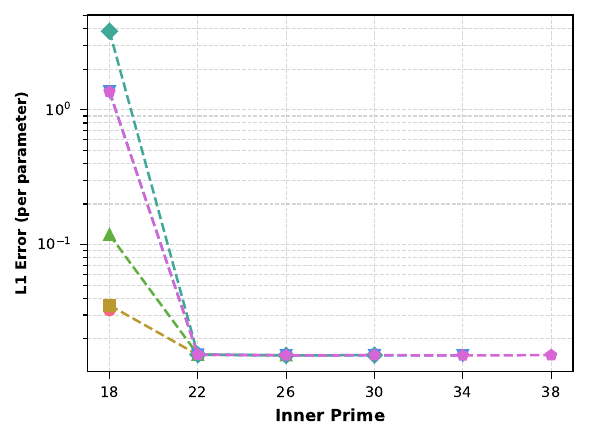}
    \caption{\fedper\ -- FEMNIST}
    \label{fig:fedper_l1_femnist}
\end{subfigure}

\vspace{0.5cm}
\begin{subfigure}[b]{0.3\textwidth}
    \centering
    \includegraphics[width=\textwidth]{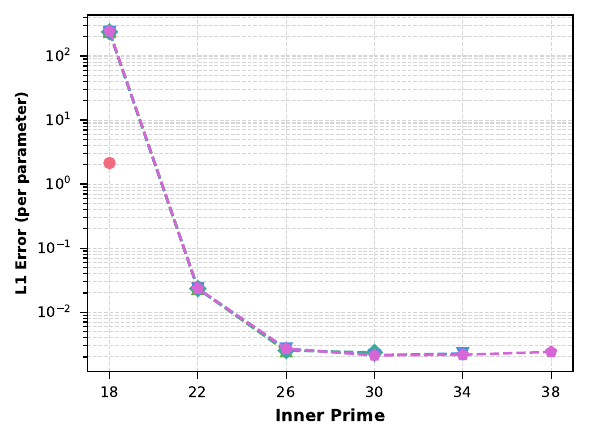}
    \caption{\finetune\ -- CelebA}
    \label{fig:ft_l1_celebA}
\end{subfigure}
\hfill
\begin{subfigure}[b]{0.3\textwidth}
    \centering
    \includegraphics[width=\textwidth]{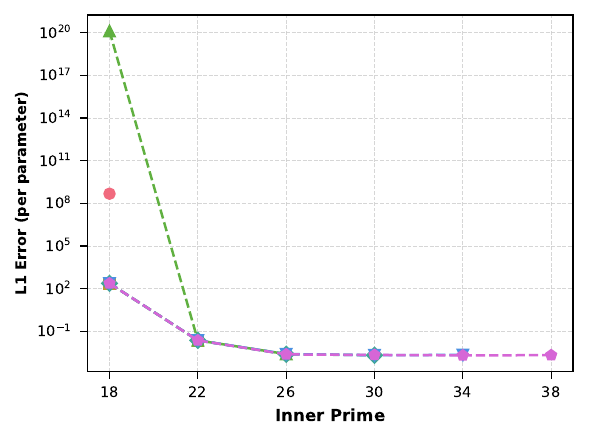}
    \caption{\ditto\ -- CelebA}
    \label{fig:ditto_l1_celebA}
\end{subfigure}
\hfill
\begin{subfigure}[b]{0.3\textwidth}
    \centering
    \includegraphics[width=\textwidth]{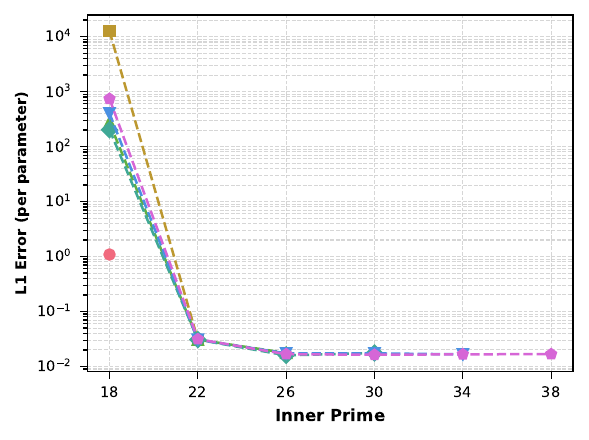}
    \caption{\fedper\ -- CelebA}
    \label{fig:fedper_l1_celebA}
\end{subfigure}

\vspace{0.5cm}
\begin{subfigure}[b]{0.3\textwidth}
    \centering
    \includegraphics[width=\textwidth]{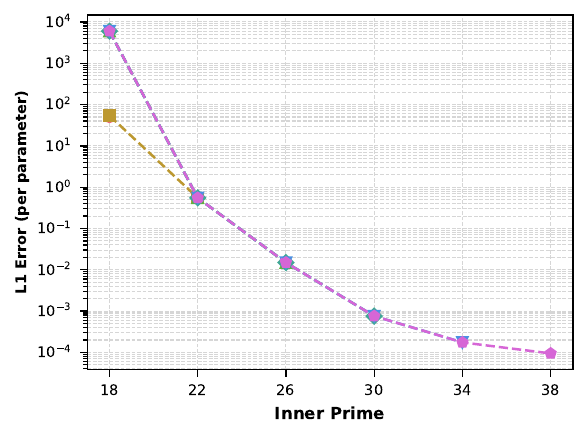}
    \caption{\finetune\ -- Sentiment140}
    \label{fig:ft_l1_sentiment}
\end{subfigure}
\hfill
\begin{subfigure}[b]{0.3\textwidth}
    \centering
    \includegraphics[width=\textwidth]{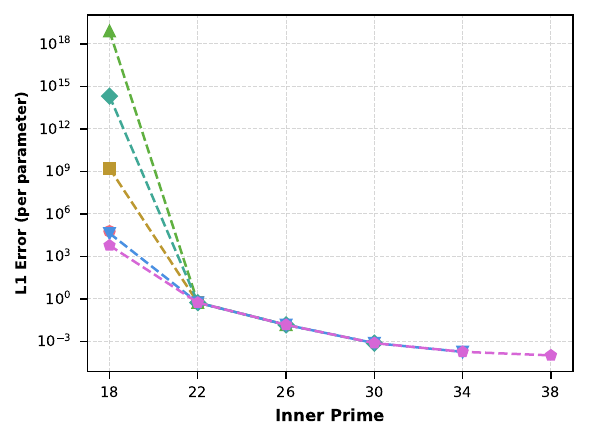}
    \caption{\ditto\ -- Sentiment140}
    \label{fig:ditto_l1_sentiment}
\end{subfigure}
\hfill
\begin{subfigure}[b]{0.3\textwidth}
    \centering
    \includegraphics[width=\textwidth]{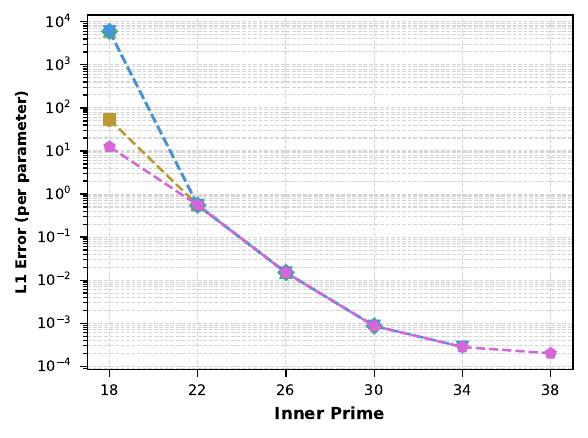}
    \caption{\fedper\ -- Sentiment140}
    \label{fig:fedper_l1sentiment}
\end{subfigure}

\caption{L1 error (accuracy deviation) of \acro\ under various inner and outer prime bits across FEMNIST, CelebA, and Sentiment datasets. Each line corresponds to a different \emph{outer} prime value:
\textcolor{outer20}{\textbf{$\bullet$=20}},
\textcolor{outer24}{\textbf{$\blacksquare$=24}},
\textcolor{outer28}{\textbf{$\blacktriangle$=28}},
\textcolor{outer32}{\textbf{$\blacklozenge$=32}},
\textcolor{outer36}{\textbf{$\blacktriangledown$=36}},
\textcolor{outer40}{\textbf{\blackpentagon=40}};  }
\label{fig:accuracy_trend}
\end{figure*}

\definecolor{cellred}{RGB}{255, 102, 102}       
\definecolor{cellorange}{RGB}{255, 178, 102}    
\definecolor{celllightgreen}{RGB}{198, 239, 206}
\definecolor{celldarkgreen}{RGB}{0, 176, 80}    

\newcommand{\red}[1]{\cellcolor{cellred}$#1$}
\newcommand{\ora}[1]{\cellcolor{cellorange}$#1$}
\newcommand{\lgrn}[1]{\cellcolor{celllightgreen}$#1$}
\newcommand{\dgrn}[1]{\cellcolor{celldarkgreen}$#1$}
\begin{table*}[ht]
\centering
\begin{adjustbox}{width=0.33\textwidth,valign=t}
\begin{tabular}{c|cccccc}
\toprule
\multicolumn{7}{c}{\textbf{FEMNIST -- Fine-Tuning (Base: 83.4\%)}} \\
\midrule
\textbf{O$\backslash$I} & \textbf{18} & \textbf{22} & \textbf{26} & \textbf{30} & \textbf{34} & \textbf{38} \\
\midrule
\textbf{20} & \red{-81.4{\pm}1.2} & $-$ & $-$ & $-$ & $-$ & $-$ \\
\textbf{24} & \red{-79.3{\pm}2.0} & \lgrn{-0.9{\pm}0.2} & $-$ & $-$ & $-$ & $-$ \\
\textbf{28} & \red{-78.9{\pm}2.3} & \lgrn{-0.8{\pm}0.3} & \lgrn{-0.4{\pm}0.3} & $-$ & $-$ & $-$ \\
\textbf{32} & \red{-76.5{\pm}0.9} & \lgrn{-0.8{\pm}0.2} & \lgrn{-0.1{\pm}0.1} & \lgrn{-0.1{\pm}0.1} & $-$ & $-$ \\
\textbf{36} & \red{-78.3{\pm}1.0} & \red{-1.2{\pm}0.3} & \lgrn{-0.2{\pm}0.1} & \lgrn{-0.1{\pm}0.2} & \lgrn{-0.2{\pm}0.3} & $-$ \\
\textbf{40} & \red{-77.7{\pm}0.6} & \lgrn{-0.8{\pm}0.4} & \lgrn{-0.3{\pm}0.1} & \lgrn{-0.5{\pm}0.2} & \lgrn{-0.2{\pm}0.2} & \lgrn{-0.4{\pm}0.2} \\
\bottomrule
\end{tabular}
\end{adjustbox}%
\begin{adjustbox}{width=0.33\textwidth,valign=t}
\begin{tabular}{c|cccccc}
\toprule
\multicolumn{7}{c}{\textbf{FMNIST -- Ditto (Base: 83.1\%)}} \\
\midrule
\textbf{O$\backslash$I} & \textbf{18} & \textbf{22} & \textbf{26} & \textbf{30} & \textbf{34} & \textbf{38} \\
\midrule
\textbf{20} & \red{-78.3{\pm}0.9} & $-$ & $-$ & $-$ & $-$ & $-$ \\
\textbf{24} & \red{-78.9{\pm}0.7} & \lgrn{-0.4{\pm}0.2} & $-$ & $-$ & $-$ & $-$ \\
\textbf{28} & \red{-79.5{\pm}1.3} & \lgrn{-0.5{\pm}0.4} & \dgrn{+0.2{\pm}0.4} & $-$ & $-$ & $-$ \\
\textbf{32} & \red{-78.7{\pm}1.1} & \lgrn{-0.6{\pm}0.1} & \dgrn{+0.1{\pm}0.6} & \dgrn{+0.5{\pm}0.0} & $-$ & $-$ \\
\textbf{36} & \red{-78.9{\pm}0.6} & \lgrn{-0.4{\pm}0.2} & \dgrn{+0.2{\pm}0.3} & \dgrn{+0.3{\pm}0.2} & \lgrn{-0.1{\pm}0.2} & $-$ \\
\textbf{40} & \red{-78.4{\pm}0.2} & \lgrn{-0.4{\pm}0.2} & \lgrn{-0.0{\pm}0.2} & \lgrn{-0.1{\pm}0.1} & \lgrn{-0.4{\pm}0.2} & \dgrn{+0.0{\pm}0.1} \\
\bottomrule
\end{tabular}
\end{adjustbox}%
\begin{adjustbox}{width=0.33\textwidth,valign=t}
\begin{tabular}{c|cccccc}
\toprule
\multicolumn{7}{c}{\textbf{FMNIST -- FedPer (Base: 81.7\%)}} \\
\midrule
\textbf{O$\backslash$I} & \textbf{18} & \textbf{22} & \textbf{26} & \textbf{30} & \textbf{34} & \textbf{38} \\
\midrule
\textbf{20} & \red{-76.7{\pm}0.2} & $-$ & $-$ & $-$ & $-$ & $-$ \\
\textbf{24} & \red{-75.9{\pm}1.2} & \lgrn{-0.1{\pm}0.1} & $-$ & $-$ & $-$ & $-$ \\
\textbf{28} & \red{-75.7{\pm}0.9} & \lgrn{-0.5{\pm}0.2} & \dgrn{+0.0{\pm}0.4} & $-$ & $-$ & $-$ \\
\textbf{32} & \red{-76.3{\pm}0.8} & \lgrn{-0.2{\pm}0.5} & \lgrn{-0.3{\pm}0.5} & \dgrn{+0.1{\pm}0.1} & $-$ & $-$ \\
\textbf{36} & \red{-76.2{\pm}0.2} & \lgrn{-0.5{\pm}0.5} & \lgrn{-0.2{\pm}0.2} & \lgrn{-0.1{\pm}0.3} & \dgrn{+0.1{\pm}0.5} & $-$ \\
\textbf{40} & \red{-74.8{\pm}1.4} & \red{-1.4{\pm}0.4} & \dgrn{+0.3{\pm}0.5} & \dgrn{+0.1{\pm}0.4} & \dgrn{+0.2{\pm}0.5} & \dgrn{+0.4{\pm}0.4} \\
\bottomrule
\end{tabular}
\end{adjustbox}

\vspace{8pt}

\begin{adjustbox}{width=0.33\textwidth,valign=t}
\begin{tabular}{c|cccccc}
\toprule
\multicolumn{7}{c}{\textbf{Sentiment140 -- Fine-Tuning (Base: 67.1\%)}} \\
\midrule
\textbf{O$\backslash$I} & \textbf{18} & \textbf{22} & \textbf{26} & \textbf{30} & \textbf{34} & \textbf{38} \\
\midrule
\textbf{20} & \red{-17.4{\pm}0.4} & $-$ & $-$ & $-$ & $-$ & $-$ \\
\textbf{24} & \red{-17.1{\pm}0.1} & \red{-2.1{\pm}0.4} & $-$ & $-$ & $-$ & $-$ \\
\textbf{28} & \red{-17.1{\pm}0.0} & \red{-1.7{\pm}0.3} & \dgrn{+0.0{\pm}0.3} & $-$ & $-$ & $-$ \\
\textbf{32} & \red{-17.0{\pm}0.0} & \red{-2.0{\pm}0.2} & \lgrn{-0.2{\pm}0.3} & \dgrn{+0.1{\pm}0.1} & $-$ & $-$ \\
\textbf{36} & \red{-16.6{\pm}0.6} & \red{-1.7{\pm}0.4} & \dgrn{+0.4{\pm}0.1} & \dgrn{+0.0{\pm}0.1} & \dgrn{+0.3{\pm}0.3} & $-$ \\
\textbf{40} & \red{-17.0{\pm}0.0} & \red{-2.0{\pm}0.3} & \lgrn{-0.0{\pm}0.4} & \dgrn{+0.3{\pm}0.2} & \dgrn{+0.1{\pm}0.5} & \dgrn{+0.2{\pm}0.4} \\
\bottomrule
\end{tabular}
\end{adjustbox}%
\begin{adjustbox}{width=0.33\textwidth,valign=t}
\begin{tabular}{c|cccccc}
\toprule
\multicolumn{7}{c}{\textbf{Sentiment140 -- Ditto (Base: 68.2\%)}} \\
\midrule
\textbf{O$\backslash$I} & \textbf{18} & \textbf{22} & \textbf{26} & \textbf{30} & \textbf{34} & \textbf{38} \\
\midrule
\textbf{20} & \red{-18.3{\pm}0.4} & $-$ & $-$ & $-$ & $-$ & $-$ \\
\textbf{24} & \red{-18.0{\pm}0.4} & \red{-2.6{\pm}0.6} & $-$ & $-$ & $-$ & $-$ \\
\textbf{28} & \red{-17.8{\pm}0.0} & \red{-1.8{\pm}0.7} & \lgrn{-0.2{\pm}0.5} & $-$ & $-$ & $-$ \\
\textbf{32} & \red{-18.0{\pm}0.4} & \red{-2.2{\pm}0.5} & \lgrn{-0.3{\pm}0.5} & \lgrn{-0.2{\pm}0.3} & $-$ & $-$ \\
\textbf{36} & \red{-17.8{\pm}0.0} & \red{-2.3{\pm}0.6} & \lgrn{-0.4{\pm}0.4} & \lgrn{-0.1{\pm}0.7} & \lgrn{-0.4{\pm}0.5} & $-$ \\
\textbf{40} & \red{-17.8{\pm}0.0} & \red{-2.3{\pm}0.6} & \lgrn{-0.7{\pm}0.6} & \lgrn{-0.3{\pm}0.2} & \lgrn{-0.1{\pm}0.7} & \lgrn{-0.2{\pm}0.4} \\
\bottomrule
\end{tabular}
\end{adjustbox}%
\begin{adjustbox}{width=0.33\textwidth,valign=t}
\begin{tabular}{c|cccccc}
\toprule
\multicolumn{7}{c}{\textbf{Sentiment140 -- FedPer (Base: 66.4\%)}} \\
\midrule
\textbf{O$\backslash$I} & \textbf{18} & \textbf{22} & \textbf{26} & \textbf{30} & \textbf{34} & \textbf{38} \\
\midrule
\textbf{20} & \red{-16.3{\pm}0.0} & $-$ & $-$ & $-$ & $-$ & $-$ \\
\textbf{24} & \red{-16.3{\pm}0.0} & \red{-1.9{\pm}0.2} & $-$ & $-$ & $-$ & $-$ \\
\textbf{28} & \red{-16.4{\pm}0.1} & \red{-1.8{\pm}0.7} & \dgrn{+0.0{\pm}0.1} & $-$ & $-$ & $-$ \\
\textbf{32} & \red{-16.3{\pm}0.0} & \red{-1.4{\pm}0.2} & \dgrn{+0.4{\pm}0.5} & \lgrn{-0.2{\pm}0.2} & $-$ & $-$ \\
\textbf{36} & \red{-16.3{\pm}0.0} & \red{-1.9{\pm}0.2} & \lgrn{-0.5{\pm}0.3} & \dgrn{+0.1{\pm}0.4} & \lgrn{-0.2{\pm}0.3} & $-$ \\
\textbf{40} & \red{-16.3{\pm}0.0} & \red{-1.7{\pm}0.5} & \dgrn{+0.3{\pm}0.2} & \dgrn{+0.5{\pm}0.2} & \lgrn{-0.5{\pm}0.3} & \lgrn{-0.3{\pm}0.1} \\
\bottomrule
\end{tabular}
\end{adjustbox}

\vspace{8pt}

\begin{adjustbox}{width=0.33\textwidth,valign=t}
\begin{tabular}{c|cccccc}
\toprule
\multicolumn{7}{c}{\textbf{CelebA -- Fine-Tuning (Base: 86.1\%)}} \\
\midrule
\textbf{O$\backslash$I} & \textbf{18} & \textbf{22} & \textbf{26} & \textbf{30} & \textbf{34} & \textbf{38} \\
\midrule
\textbf{20} & \red{-36.6{\pm}2.2} & $-$ & $-$ & $-$ & $-$ & $-$ \\
\textbf{24} & \red{-32.8{\pm}4.6} & \dgrn{+0.4{\pm}2.6} & $-$ & $-$ & $-$ & $-$ \\
\textbf{28} & \red{-35.3{\pm}2.4} & \dgrn{+1.1{\pm}1.1} & \dgrn{+0.1{\pm}1.6} & $-$ & $-$ & $-$ \\
\textbf{32} & \red{-35.2{\pm}2.5} & \dgrn{+0.7{\pm}1.7} & \dgrn{+0.2{\pm}1.6} & \lgrn{-0.3{\pm}1.5} & $-$ & $-$ \\
\textbf{36} & \red{-35.3{\pm}2.4} & \dgrn{+1.2{\pm}1.2} & \dgrn{+0.3{\pm}1.8} & \lgrn{-0.2{\pm}1.7} & \dgrn{+0.2{\pm}1.7} & $-$ \\
\textbf{40} & \red{-33.5{\pm}0.2} & \dgrn{+1.4{\pm}1.2} & \dgrn{+0.2{\pm}1.7} & \dgrn{+0.1{\pm}2.1} & \lgrn{-0.1{\pm}2.1} & \dgrn{+0.0{\pm}1.5} \\
\bottomrule
\end{tabular}
\end{adjustbox}%
\begin{adjustbox}{width=0.33\textwidth,valign=t}
\begin{tabular}{c|cccccc}
\toprule
\multicolumn{7}{c}{\textbf{CelebA -- Ditto (Base: 86.3\%)}} \\
\midrule
\textbf{O$\backslash$I} & \textbf{18} & \textbf{22} & \textbf{26} & \textbf{30} & \textbf{34} & \textbf{38} \\
\midrule
\textbf{20} & \red{-37.7{\pm}1.8} & $-$ & $-$ & $-$ & $-$ & $-$ \\
\textbf{24} & \red{-36.7{\pm}2.0} & \dgrn{+0.3{\pm}0.5} & $-$ & $-$ & $-$ & $-$ \\
\textbf{28} & \red{-37.4{\pm}2.3} & \dgrn{+0.9{\pm}0.6} & \lgrn{-0.2{\pm}1.7} & $-$ & $-$ & $-$ \\
\textbf{32} & \red{-36.3{\pm}2.3} & \dgrn{+1.0{\pm}1.5} & \lgrn{-0.1{\pm}1.7} & \lgrn{-0.1{\pm}1.8} & $-$ & $-$ \\
\textbf{36} & \red{-33.8{\pm}0.2} & \dgrn{+1.1{\pm}1.4} & \dgrn{+0.1{\pm}1.5} & \lgrn{-0.1{\pm}1.8} & \lgrn{-0.1{\pm}1.8} & $-$ \\
\textbf{40} & \red{-35.6{\pm}2.5} & \dgrn{+0.7{\pm}1.2} & \lgrn{-0.4{\pm}1.6} & \lgrn{-0.4{\pm}1.5} & \lgrn{-0.3{\pm}2.2} & \lgrn{-0.0{\pm}2.4} \\
\bottomrule
\end{tabular}
\end{adjustbox}%
\begin{adjustbox}{width=0.33\textwidth,valign=t}
\begin{tabular}{c|cccccc}
\toprule
\multicolumn{7}{c}{\textbf{CelebA -- FedPer (Base: 86.5\%)}} \\
\midrule
\textbf{O$\backslash$I} & \textbf{18} & \textbf{22} & \textbf{26} & \textbf{30} & \textbf{34} & \textbf{38} \\
\midrule
\textbf{20} & \red{-34.0{\pm}0.2} & $-$ & $-$ & $-$ & $-$ & $-$ \\
\textbf{24} & \red{-34.0{\pm}0.2} & \lgrn{-0.3{\pm}0.8} & $-$ & $-$ & $-$ & $-$ \\
\textbf{28} & \red{-35.8{\pm}2.4} & \lgrn{-0.4{\pm}1.1} & \lgrn{-0.3{\pm}1.5} & $-$ & $-$ & $-$ \\
\textbf{32} & \red{-33.9{\pm}0.2} & \dgrn{+0.4{\pm}0.5} & \lgrn{-0.8{\pm}1.5} & \dgrn{+0.3{\pm}0.6} & $-$ & $-$ \\
\textbf{36} & \red{-34.0{\pm}0.2} & \lgrn{-0.1{\pm}1.4} & \lgrn{-0.9{\pm}0.2} & \lgrn{-0.0{\pm}1.1} & \dgrn{+0.7{\pm}1.0} & $-$ \\
\textbf{40} & \red{-34.0{\pm}0.2} & \lgrn{-0.9{\pm}2.4} & \dgrn{+0.1{\pm}1.5} & \lgrn{-0.4{\pm}0.3} & \lgrn{-0.1{\pm}0.3} & \lgrn{-0.1{\pm}0.9} \\
\bottomrule
\end{tabular}
\end{adjustbox}

\vspace{4pt}
\noindent
\colorbox{cellred}{\strut~} $< -1\%$ 
\colorbox{celllightgreen}{\strut~} $-1\%$ to $0\%$ \quad
\colorbox{celldarkgreen}{\color{white}\strut~} $> 0\%$

\caption{Accuracy difference (\acro vs the baseline) in percentage points for each dataset and federated learning method. O and I denote the outer and inner prime parameters, respectively. Values are reported as mean $\pm$ std across 3 seeds.}
\label{tab:all_accuracy_diff}
\end{table*}

\subsection{Precision}\label{subsec:precision-overhead}
We evaluate the impact of \acro on precision at two levels: the parameter level and the model level.
This allows us to assess how \acro overhead affects the numerical precision of the model parameters (the former) and whether these parameter deviations are sufficient to impact the model's predictive performance (the latter).

\paragraph{Parameter-level Precision}
We define the parameter-level precision overhead as the average $\ell_1$ training error ($\mathcal{E}_{\ell1}$) between the final personalized parameters in \acro ($\mathbf{w}^{\text{\acro}}$) and the baseline ($\mathbf{w}^{\text{baseline}}$), formally:
\begin{equation}
    \mathcal{E}_{\ell1} = \frac{1}{N} \sum_{i=1}^{N} \left\| \mathbf{w}^{\text{\acro}}_i - \mathbf{w}^{\text{base}}_i \right\|_1
\end{equation}

Recall from Section~\ref{subsec:analysis} that, in \acro, the precision for the decimal points of decrypted parameters is determined by the inner prime while the integer-part precision is controlled by the difference between the outer and inner prime values, where the larger value leads to better precision. 

Figure~\ref{fig:accuracy_trend} illustrates $\mathcal{E}_{\ell1}$ under various inner and outer prime bit settings. 
We observe that $\mathcal{E}_{\ell1}$ is primarily influenced by the inner prime, while the choice of outer prime has a negligible effect (besides when the inner prime is 18 bits); this trend is consistent across all nine settings.
Further inspection suggests that this behavior arises because $\mathbf{w}^{\text{base}}$ lie within a range $[-6, 6]$, which is sufficiently small for all evaluated configurations to provide enough integer precision.

In contrast, the precision of the decimal points is controlled by the inner prime. 
Increasing the inner prime improves this precision, with $\mathcal{E}_{\ell1}$ going down rapidly. 
Across all settings, this improvement begins to plateau around 26-bit inner prime (corresponding to $\mathcal{E}_{\ell1}$ around $10^{-2}$), indicating diminishing returns of $\mathcal{E}_{\ell1}$ afterwards.

\paragraph{Model-level Precision}
Next, we consider model-level precision by measuring the difference in test accuracy between \acro and the baseline. Table~\ref{tab:all_accuracy_diff} reports these results.
We observe that, regardless of the outer prime value, using an 18-bit inner prime leads to severe model accuracy degradation. 
This behavior is consistent with the parameter-level precision results in Figure~\ref{fig:accuracy_trend}, which show that the personalized parameters obtained with an 18-bit inner prime deviate substantially from the baseline. 
In some cases (e.g., Figure~\ref{fig:ditto_l1_femnist}), the parameter error $\mathcal{E}_{\ell1}$ reaches as high as $10^{22}$. Such extreme errors suggest that the accumulated CKKS approximation noise may cause overflow during decoding, resulting in corrupted model parameters and poor model accuracy.

In contrast, increasing the inner prime to 22 bits substantially improves compared to the 18-bit setting. However, the test accuracy still shows an observable drop (more than 1\%) in many settings. 
In comparison, larger inner primes ($\ge$ 26 bits) consistently maintain test accuracy at a level comparable to the baseline.
Similar to parameter-level precision, the outer prime has no impact on model-level precision.

\begin{figure*}[t]
\centering

\begin{subfigure}[b]{0.3\textwidth}
    \centering
    \includegraphics[width=\textwidth]{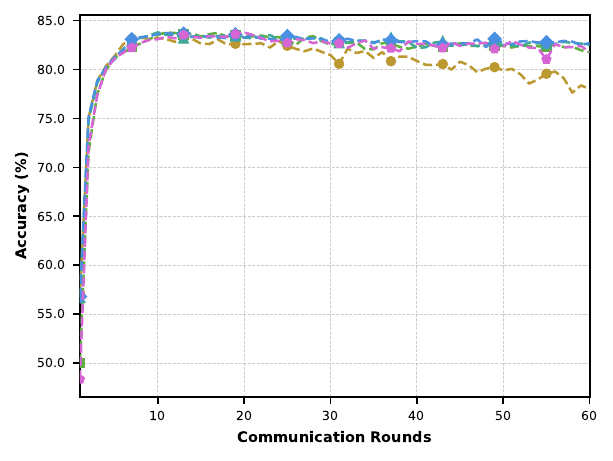}
    \caption{\finetune\ -- FEMNIST}
    \label{fig:ft_com_round_femnist}
\end{subfigure}
\hfill
\begin{subfigure}[b]{0.3\textwidth}
    \centering
    \includegraphics[width=\textwidth]{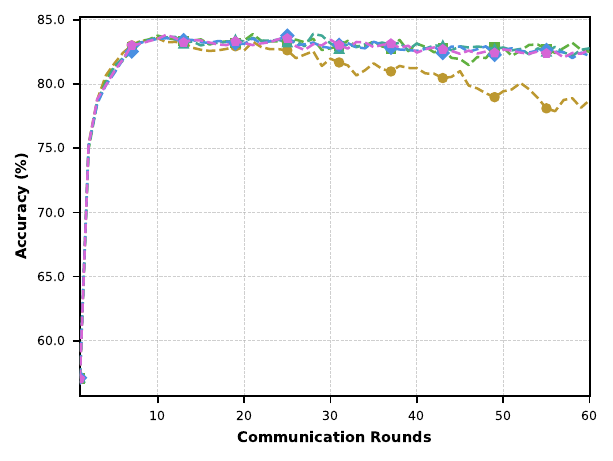}
    \caption{\ditto\ -- FEMNIST}
    \label{fig:ditto_com_round_femnist}
\end{subfigure}
\hfill
\begin{subfigure}[b]{0.3\textwidth}
    \centering
    \includegraphics[width=\textwidth]{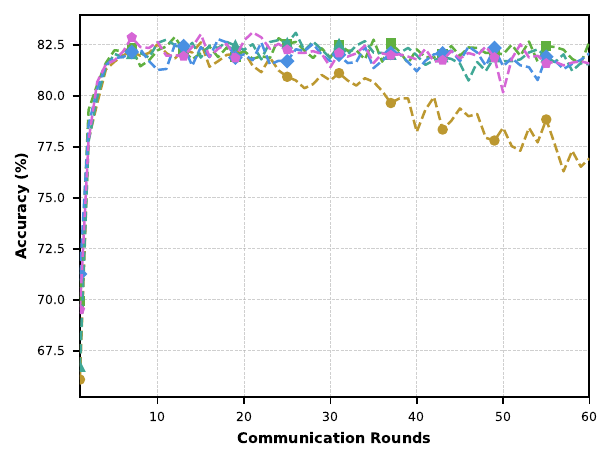}
    \caption{\fedper\ -- FEMNIST}
    \label{fig:fedper_com_round_femnist}
\end{subfigure}

\vspace{0.5cm}
\begin{subfigure}[b]{0.3\textwidth}
    \centering
    \includegraphics[width=\textwidth]{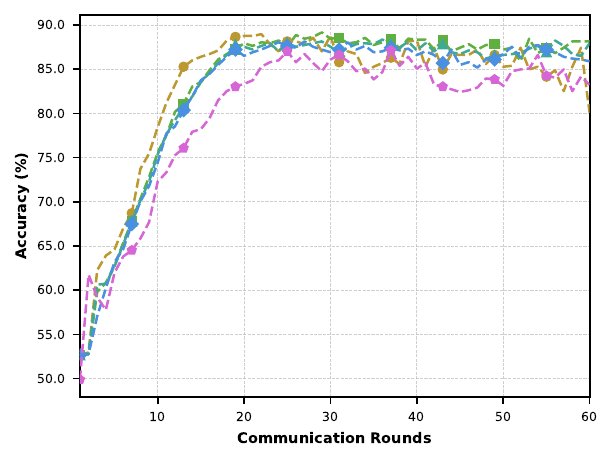}
    \caption{\finetune\ -- CelebA}
    \label{fig:ft_com_round_celebA}
\end{subfigure}
\hfill
\begin{subfigure}[b]{0.3\textwidth}
    \centering
    \includegraphics[width=\textwidth]{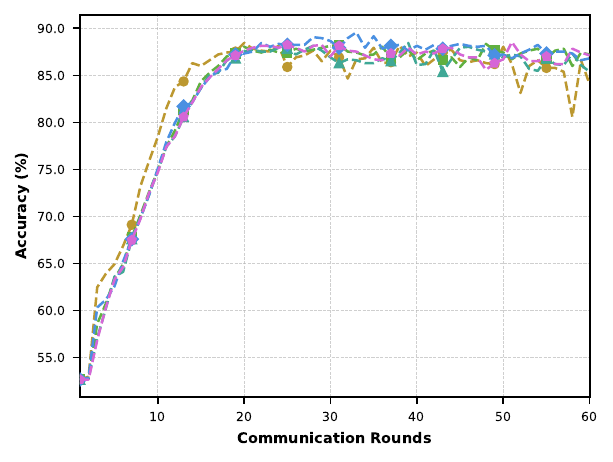}
    \caption{\ditto\ -- CelebA}
    \label{fig:ditto_com_round_celebA}
\end{subfigure}
\hfill
\begin{subfigure}[b]{0.3\textwidth}
    \centering
    \includegraphics[width=\textwidth]{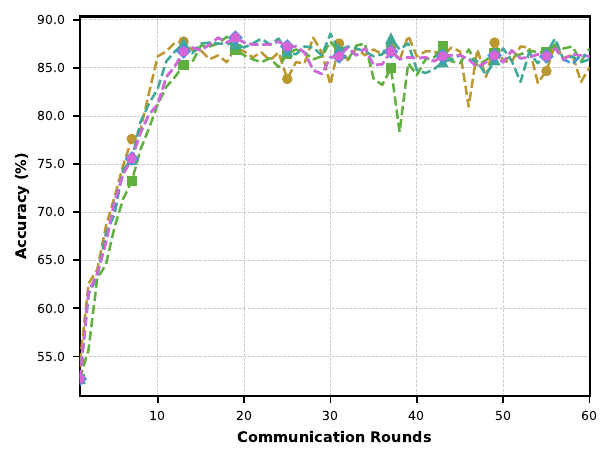}
    \caption{\fedper\ -- CelebA}
    \label{fig:fedper_com_round_celebA}
\end{subfigure}

\vspace{0.5cm}
\begin{subfigure}[b]{0.3\textwidth}
    \centering
    \includegraphics[width=\textwidth]{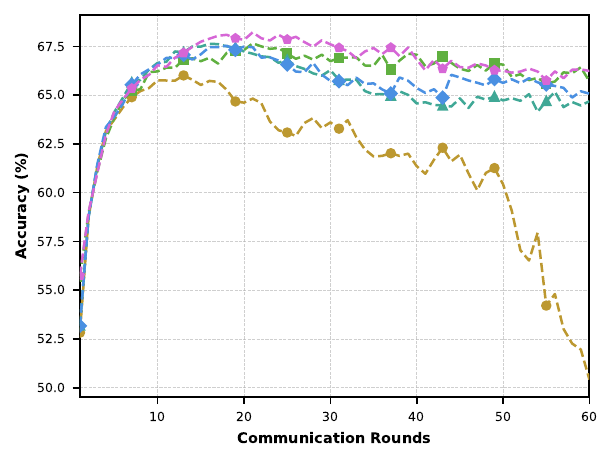}
    \caption{\finetune\ -- Sentiment140}
    \label{fig:ft_com_round_sentiment}
\end{subfigure}
\hfill
\begin{subfigure}[b]{0.3\textwidth}
    \centering
    \includegraphics[width=\textwidth]{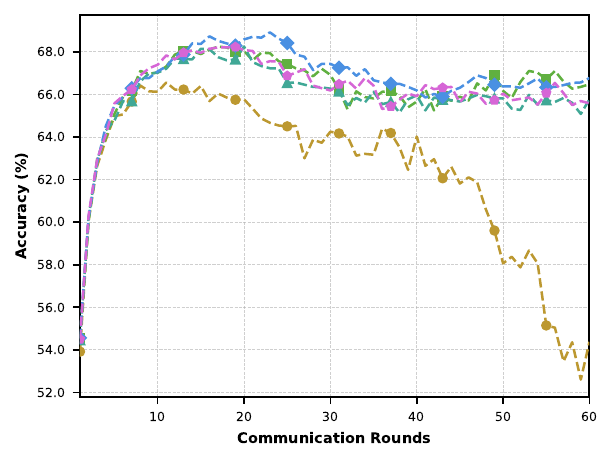}
    \caption{\ditto\ -- Sentiment140}
    \label{fig:ditto_com_round_sentiment}
\end{subfigure}
\hfill
\begin{subfigure}[b]{0.3\textwidth}
    \centering
    \includegraphics[width=\textwidth]{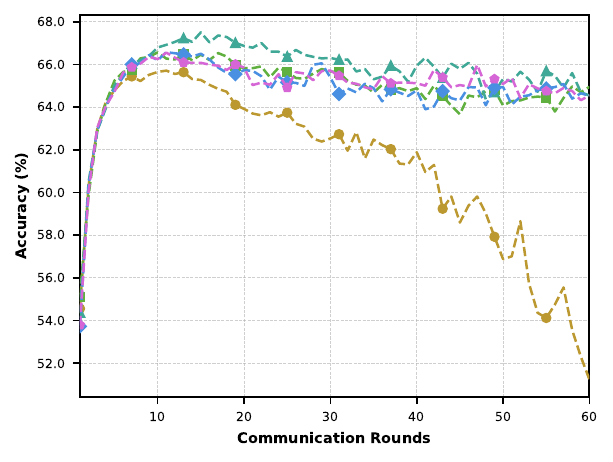}
    \caption{\fedper\ -- Sentiment140}
    \label{fig:fedper_com_round_sentiment}
\end{subfigure}

\caption{Model accuracy across different numbers of communication rounds when \acro is instantiated with a fixed 40-bit outer prime and varying inner prime bits. Each line represents a different inner prime value:
\textcolor{inner22}{\textbf{$\bullet$=22}},
\textcolor{inner26}{\textbf{$\blacksquare$=26}},
\textcolor{inner30}{\textbf{$\blacktriangle$=30}},
\textcolor{inner34}{\textbf{$\blacklozenge$=34}},
\textcolor{inner38}{\textbf{\blackpentagon=38}}.
}
\label{fig:commrounds}
\end{figure*}


Next, we investigate the impact of the number of federated communication rounds on both model convergence and model-level precision in \acro. 
Since CKKS introduces approximation noise, repeated encryption and decryption across rounds may accumulate errors and potentially affect convergence/accuracy.

Since the outer prime has negligible impact on precision, we fix the outer prime to 40 bits and vary only the inner prime; we also intentionally exclude the 18-bit inner prime from this experiment due to its poor accuracy results in the previous evaluation. 
We then track model accuracy over 60 federated communication rounds.

Figure~\ref{fig:commrounds} presents the results. 
Overall, almost all datasets and PFL algorithms exhibit similar convergence behavior: model accuracy improves rapidly during the early communication rounds and gradually stabilizes as training progresses.
The only notable exceptions are FEMNIST and Sentiment140 when using a 22-bit inner prime. 
In these cases, model accuracy begins to decline after 25 communication rounds for FEMNIST and 10 communication rounds for Sentiment140. 
A possible explanation is the accumulation of CKKS noise over repeated rounds of training, which may gradually reduce the precision of the aggregated model and negatively affect convergence.
Since both models contain a relatively large number of parameters, they may be more susceptible to the accumulation of CKKS approximation noise across communication rounds, resulting in degraded accuracy during the later stages of training.
This result further supports our earlier finding that a 22-bit inner prime is insufficient to maintain model-level precision. 
In contrast, inner primes of at least 26 bits consistently achieve convergence behavior and final model accuracy comparable to the baseline, even after prolonged training.

\begin{Takeaway}{Impact on Precision}
xIn \acro, the inner prime is the primary factor affecting both parameter- and model-level precision. Smaller inner primes increase average $\ell_1$ parameter error, introduce numerical instability, and degrade test accuracy. Increasing the inner prime improves precision significantly, while diminishing returns are observed beyond 26 bits. 
In contrast, the outer prime has negligible impact on precision.
\end{Takeaway}


\subsection{Parameters Recommendation}

Our empirical results highlight the trade-off between precision and computational/communication costs in \acro. In particular, Takeaways~1-2 suggest using smaller inner and outer primes to reduce computation and communication overhead.

In contrast, Takeaway 3 indicates that increasing the inner prime improves parameter/model-level precision while the outer prime has negligible impact on precision. 
In practice, however, minor degradation at the parameter level is acceptable as long as model-level accuracy is preserved. 
Our results show that inner primes of 26 bits or higher consistently maintain model-level accuracy comparable to the baseline across all evaluated outer prime settings. 
While larger inner primes further reduce parameter-level error, they yield negligible improvements in model-level accuracy beyond 26 bits, while incurring higher computation and communication costs.

Based on these observations, we recommend using a CKKS ciphertext modulus configuration of $(28,26,28)$ for \acro deployment. This configuration minimizes communication and computation overhead while maintaining model accuracy comparable to the baseline.

\ignore{

\noindent
\textbf{Overall Accuracy Performance Analysis}
Both Fine-tuning and Ditto demonstrate excellent model accuracy preservation when using homomorphic encryption, with L1 model errors remaining extremely small across all tested configurations. The L1 errors range from approximately 0.0221 at the lowest inner prime values down to 0.000009 at the highest inner prime values, indicating that the encrypted models maintain nearly identical performance to their non-encrypted counterparts. This shows that homomorphic encryption effectively protects data privacy without degrading model accuracy or learning capability.

\textbf{Impact of Outer Prime Values on Model Error}
\begin{itemize}
    \item \textbf{At inner prime=18:} All configurations show relatively higher errors around 0.0221-0.0223, regardless of outer prime value. For example, Fine-tuning shows errors ranging from 0.02210 to 0.02226, while Ditto shows similar values from 0.02208 to 0.02217.
    \item \textbf{At inner prime=22:} Errors drop sharply to approximately 0.00084-0.00085 across all outer prime values,nearly a 96\% reduction from inner prime 18.
    \item \textbf{At inner prime=26:} Errors continue to decrease to around 0.00005, representing another order of magnitude reduction.
    \item \textbf{At inner prime=30:} Errors reach approximately 0.000012, showing continued improvement.
    \item \textbf{At inner prime=34:} Errors decrease further to around 0.00001 or lower.
    \item \textbf{At inner prime=38:} The lowest errors are achieved, reaching approximately 0.000009 (Fine-tuning) and 0.00000908 (Ditto) for outer prime = 40.
\end{itemize}
\textbf{Impact of Inner Prime Values on Model Error}
Unlike bandwidth consumption, outer prime values have minimal impact on model error. When comparing different outer prime configurations at the same inner prime value, the L1 errors remain remarkably similar:
\begin{itemize}
    \item At inner prime = 22: All outer prime values (24-40) produce errors between 0.00084-0.00085
    \item At inner prime = 26: All outer prime values (28-40) produce errors between 0.00005-0.000051
    \item At inner prime = 30: All outer prime values (32-40) produce errors between 0.000012-0.000013
\end{itemize}
This consistency indicates that the outer prime parameter primarily affects bandwidth and computational cost, while having negligible impact on model accuracy.

\textbf{Comparison Between Fine-tuning and Ditto}
Examining the accuracy trends in Fine-tuning (Figure~\ref{fig:ft_bandwidth}) and Ditto (Figure~\ref{fig:ditto_accuracy}) reveals that both methods maintain excellent model performance across all tested configurations. Higher inner prime values lead to substantial reductions in L1 model error. While outer prime has the minimal impact on accuracy performance.
}

\section{Related Work}
\label{sec:related}
Despite not sharing raw data directly, traditional FL methods still suffer from the risk of privacy leakage through exchanged model updates~\cite{zhu2019deep,wang2019beyond,melis2019exploiting,luo2021feature,nasr2019comprehensive,shokri2017membership}.
To mitigate this impact, prior research has explored integrating privacy-enhancing technologies (PETs) into FL.
These efforts generally fall into four categories based on the underlying PET technique:
(1) Homomorphic Encryption (HE)~\cite{wang2020achieve,liu2021privacy},
(2) Secure Multi-party Computation~\cite{segal2017practical,ma2023flamingo},
(3) Differential Privacy~\cite{shi2023make,truex2019hybrid}, and
(4) Zero-Knowledge Proofs~\cite{li2021privacy}.
Each category offers distinct trade-offs in terms of computation and communication costs, trust assumptions, and security guarantees; we refer to the recent survey~\cite{chen2024federated} for a comprehensive comparison of these categories.

Since this work focuses on integrating the CKKS scheme into PFL, we review prior work in the HE category.
Early work~\cite{cheng2021secureboost,aono2017privacy,liu2019secure} adopted the Paillier additive HE scheme~\cite{paillier1999public} to encrypt local model updates before aggregation on the server.
However, these approaches introduced substantial computation and communication overhead, which impeded their practical applicability.
Subsequent studies~\cite{zhang2020batchcrypt,li2025fedphe,pan2024fedshe,qiu2022privacy} improved upon this by incorporating batching techniques to group model parameters prior to encryption.
One promising direction is to adopt the CKKS scheme~\cite{cheon2017homomorphic,cheon2018full} into FL since it natively supports batching by encoding multiple floating-point parameters into a single plaintext, later encrypted into a single ciphertext.

Building upon this property, FedSHE~\cite{pan2024fedshe} investigates the integration of CKKS into FL, analyzing how the multiplicative depth parameter ($L$) affects computational overhead.
However, its analysis is limited to a single parameter while leaving other CKKS parameters unexplored.
In contrast, our parameter selection analysis is more comprehensive, covering a wider range of CKKS parameters.
Moreover, our evaluation considers both computation and communication costs, and more importantly their trade-off with decryption precision, which was not explored in FedSHE.
Finally, FedSHE focuses on standard FL with the FedAVG algorithm~\cite{mcmahan2017communication}, whereas our work targets PFL with the analysis and findings that can generalize to standard FL settings.

To the best of our knowledge, only one study has examined the integration of HE within a PFL framework:
CLDP-MPE~\cite{shen2024ckks} applies the CKKS scheme and local differential privacy to enhance security and privacy of the PFL method based on a group-based meta-learning algorithm~\cite{yang2023personalized}.
Nonetheless, it does not investigate how CKKS affects decryption precision or the associated computation and communication costs, which are the main focus of this work.

\ignore{
\subsection{Privacy-preserving Personalized Federated Learning}

\subsection{Evolution of Privacy-Preserving FL (PFL)}
\label{ssec:evolution_pfl}

FL started in 2016\cite{korkmaz2501selective}. Clients train models on their own data and only send weights to the server. This avoids moving sensitive data, but attacks can still happen. For example, gradient inversion attacks or membership inference can reveal private data\cite{korkmaz2501selective}. So, extra cryptography is needed, not just normal network security.
Early solutions used Differential Privacy\cite{bagdasaryan2019differential}(DP) and Secure Aggregation (SA)\cite{fan2024lightweight}.  
\begin{itemize}
    \item DP adds random noise to updates. It hides data but can make the model less accurate, which is bad for precise tasks like medical images\cite{bagdasaryan2019differential}.  
    \item SA hides each client’s updates from the server. It keeps accuracy but needs careful design\cite{fan2024lightweight}.  
\end{itemize}

Because DP and SA have limits, researchers also use Homomorphic Encryption (HE), which lets the server calculate on encrypted data without seeing the real values\cite{catalfamo2024flower}.

\subsection{Homomorphic Encryption in Personalized Federated Learning (PFL)}
\label{ssec:he_in_pfl}

Homomorphic Encryption (HE) has recently been applied in Federated Learning (FL) to improve privacy while allowing model updates to remain encrypted during computation~\cite{}. In traditional FL, servers need to access or aggregate raw model weights, which can still leak information. HE enables servers to perform global aggregation securely without decrypting the data~\cite{catalfamo2024flower}.

In the context of \textbf{Personalized Federated Learning (PFL)}, recent studies have extended this idea further. For example, the \textbf{Priv-PFL framework} integrates Fully Homomorphic Encryption (FHE) to protect personalized training and client selection. This approach helps detect poisoned clients and enables secure model aggregation while ensuring that no client’s model data is exposed~\cite{aghabagherloo2025priv}. However, FHE-based PFL methods still face challenges such as high computation costs on the server and reliance on semi-honest assumptions for key management~\cite{aghabagherloo2025priv}.

Some works have also explored \textbf{optimized encryption schemes}, such as the \textbf{CKKS} method, which supports approximate arithmetic for faster computation. CKKS is especially suitable for deep learning because it works directly with floating-point weights while maintaining acceptable accuracy~\cite{catalfamo2024flower}. Extensions like \textbf{multi-key CKKS (xMK-CKKS)} allow each client to use its own private key while still participating in a shared encrypted aggregation, improving security in distributed settings~\cite{catalfamo2024flower}.
\textbf{Challenges}:
\begin{itemize}
    \item \textbf{Complex Protocols:} Some advanced HE protocols reduce communication but are complicated. xMK-CKKS helps with distributed aggregation\cite{catalfamo2024flower}.  
    \item \textbf{Slow Computation:} Full HE can make aggregation much slower (up to 100x). Selective encryption (only encrypt important weights) can help.\cite{korkmaz2501selective} Tools like \texttt{FedML-HE} or \texttt{MASKCRYPT} do this, but they need extra setup\cite{korkmaz2501selective}.  
\end{itemize}

Overall, while HE-based methods have improved privacy protection in both FL and PFL, most existing research still struggles with efficiency and scalability issues. These limitations motivate further studies to design faster HE-PFL systems that maintain strong privacy without sacrificing model accuracy.

\subsection{Flower Framework for Homomorphic Encryption-based Federated Learning (HE-FL)}
\label{ssec:flower_framework_related}

The Flower framework has become one of the most widely adopted open-source solutions for developing and evaluating Federated Learning (FL) systems~\cite{beutel2020flower}. Its design emphasizes scalability, interoperability, and flexibility, allowing researchers to test algorithms across heterogeneous clients and various machine learning frameworks. Flower introduces a modular \texttt{Strategy} interface, which enables the customization of aggregation and client coordination mechanisms while maintaining compatibility across devices~\cite{beutel2020flower}. These features make it a strong foundation for integrating privacy-preserving methods such as Differential Privacy and Homomorphic Encryption.

Recent research has extended Flower to support secure computation through Homomorphic Encryption (HE). Some researcher~\cite{catalfamo2024flower} proposed a complete HE-FL implementation using the Flower framework together with the TenSEAL library. Their work demonstrated how encryption can be seamlessly integrated into both client and server workflows while preserving Flower’s original architecture. Specifically, the study provided a standardized approach for executing encrypted training and aggregation in FL without exposing model weights in plaintext, offering protection against malicious or curious servers.

The integration of HE in Flower focuses on designing custom aggregation strategies such as \texttt{FedAvgEncrypted} and \texttt{FedProxEncrypted}, which securely combine client updates without requiring decryption on the server side~\cite{catalfamo2024flower}. This approach proved effective for secure model aggregation in datasets such as MNIST and CIFAR-10, though experiments revealed that encryption introduces additional computational overhead during aggregation. Still, this research established a reusable, Flower-compliant foundation for further HE-FL development.

Overall, Flower provides a practical bridge between theoretical privacy-preserving methods and deployable FL systems. Its flexible design and active community support make it a valuable platform for future research combining privacy-preserving cryptography, including HE, with scalable federated and personalized learning.
 
Most existing studies focus on traditional Federated Learning (FL) rather than Personalized FL (PFL). While methods like Differential Privacy (DP) and Secure Aggregation (SA) enhance privacy, they often sacrifice accuracy or personalization. To address this, our work combines Homomorphic Encryption (HE) with PFL using the CKKS scheme to maintain strong privacy and high accuracy without compromising personalization.
}

\section{Conclusion}
\label{sec:conclusion}
In this work, we explored the integration of CKKS homomorphic encryption into Personalized Federated Learning (PFL) through the \acro framework. 
Our evaluation of \acro across three underlying PFL algorithms (\finetune, \fedper and \ditto) demonstrated that the inner ciphertext prime is a major contributor to performance overheads and decryption precision where a larger value improves decryption accuracy but increases bandwidth and runtime. 
Based on these findings, we recommend configuring the CKKS prime parameters as $(28,26,28)$, i.e., 28-bit outer prime and 26-bit inner prime, to achieve the best practical trade-off, benefiting from high accuracy while minimizing computational and communication costs. 
Overall, \acro validates that privacy preservation in PFL can be achieved efficiently when CKKS parameters are carefully tuned. Future work includes exploring different datasets with various degrees of heterogeneity and integration with other privacy-enhancing techniques such as differential privacy.





\bibliographystyle{ieeetr}
\bibliography{ref}

\appendices

\section{Proof of \acro security}\label{apdx:proof}

Let $\mathcal{L}$ denote the protocol leakage visible to the server, including the number of participating clients, client participation, ciphertext sizes, model/update dimensions, public CKKS parameters, and dataset sizes $|D_i|$.

\begin{lemma}\label{lemma:security}
Assume that the CKKS encryption scheme used in \acro is IND-CPA secure under the RLWE assumption. Then, under the honest-but-curious server model, \acro preserves confidentiality of each client's uploaded model update $\hat{\Theta}_i$, except for the leakage $\mathcal{L}$.
\end{lemma}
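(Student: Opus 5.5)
We prove Lemma~\ref{lemma:security} with a standard simulation-plus-hybrid argument. First we formalize ``confidentiality except for $\mathcal{L}$'' as indistinguishability of views. Fix any two collections of client updates $\{\hat{\Theta}_i^{(0,r)}\}$ and $\{\hat{\Theta}_i^{(1,r)}\}$, over all clients $i$ and rounds $r\le T$, that induce the same leakage $\mathcal{L}$: the same dimensions, participation pattern and $|D_i|$. The server's view is the tuple consisting of $pk$, $evk$, the public parameters, its own random coins, and every message it receives, namely $(\hat{\Theta}_i^{e,(r)},|D_i|)$ for all $i,r$. Everything the server sends, including the aggregate $\Theta^e$, is a deterministic function of this view through \multconst and \add. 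It therefore suffices to show that the views generated from the two collections are computationally indistinguishable.

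As an alternative we can describe a simulator $\mathcal{S}(\mathcal{L},pk,evk)$. It outputs, for each client and round, $\encrypt(\encode(\mathbf{0},\Delta),pk)$ with the correct number of packed ciphertexts (determined by the model dimension and $N/2$), along with the true $|D_i|$. Real and simulated views are then related by a hybrid sequence $H_0,\dots,H_K$. Here $K$ is the total number of uploaded ciphertexts across all clients and rounds, and $H_k$ replaces the first $k$ real ciphertexts with encryptions of zero.

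The core step is to bound the distance between adjacent hybrids $H_{k-1}$ and $H_k$ by an IND-CPA adversary $\mathcal{B}$. $\mathcal{B}$ receives $pk$ (and $evk$, see below) from the challenger. It builds the server's view itself: it encrypts the other ciphertexts under $pk$, either the real messages or zero according to the hybrid index. For the $k$-th position it submits the encoded real plaintext and $\encode(\mathbf{0},\Delta)$ as challenge messages, which have equal length by construction. It then runs the distinguisher on the resulting view. Because all remaining ciphertexts are fresh public-key encryptions, $\mathcal{B}$ needs no secret key. By IND-CPA, each hop costs at most $\mathsf{negl}(\lambda)$, and a union bound over the polynomially many $K$ hybrids completes the argument. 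We also note that the clients are honest and that the server never obtains $sk$, because the leader sends $sk$ only to clients over secure channels (Section~\ref{subsec:threat}). This justifies modelling the server as an adversary that holds only $pk$ and $evk$.

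The main obstacle is $evk$. Plain IND-CPA security does not give the adversary the relinearization/evaluation key, and publishing $evk$ requires a circular-security style assumption on RLWE, since $evk$ encrypts a function of $sk$ under $sk$. To handle this, we will explicitly take the hypothesis of Lemma~\ref{lemma:security} to mean IND-CPA security in the presence of $evk$. This is the standard security notion for CKKS~\cite{cheon2017homomorphic}, under which $(pk,evk)$ is pseudorandom given the RLWE and circular-security assumptions. With that notion, the reduction $\mathcal{B}$ above receives $evk$ directly from its challenger, and the argument goes through unchanged. A second, minor subtlety is that CKKS decryptions leak noise under IND-CPA$^{D}$ attacks. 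This does not arise here, because the server never sees decryptions, so plain IND-CPA suffices.
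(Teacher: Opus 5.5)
Your proof is correct. It reaches the lemma by a more explicit route than the paper's.

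The paper gives a one-shot reduction by contradiction. From a distinguisher $\mathcal{A}$ between the encrypted uploads of two equal-length updates, it builds $\mathcal{B}$, which plants the IND-CPA challenge $c^\star$ as the target upload $\hat{\Theta}_i^e$, ``simulates the view of the honest-but-curious server'', and copies $\mathcal{A}$'s guess. That is short, but it leaves implicit exactly what your simulator-plus-hybrid argument spells out:
\begin{itemize}
\item how $\mathcal{B}$ produces the other clients' and other rounds' uploads (fresh public-key encryptions) and the aggregates (computable by the server from its own view);
\item that a single update occupies many packed ciphertexts, so a multi-message hybrid is hidden inside ``$\mathsf{Enc}_{pk}(\hat{\Theta}_i^b)$'';
\item where $\mathcal{B}$ obtains $evk$.
\end{itemize}

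Your version buys a joint guarantee for the whole transcript over all clients and rounds, with an explicit $K\cdot\mathsf{negl}(\lambda)$ loss. It also makes visible the circular-security assumption behind publishing $evk$. The paper's reduction needs that assumption just as much, since its $\mathcal{B}$ must also hand $\mathcal{A}$ an $evk$. You could sidestep it entirely: the server only applies \multconst with public scalars and \add, neither of which needs relinearization. Withholding $evk$ from the server would therefore let plain IND-CPA suffice.

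One idealization is shared by both proofs and deserves a remark. Each treats the uploaded updates as fixed plaintexts. In \acro, however, the round-$r$ updates are trained from decryptions of earlier aggregates. Their CKKS noise depends on $sk$ and on earlier encryption randomness, so a reduction without $sk$ could not sample them exactly. A fully rigorous treatment would need, for instance, noise flooding at client-side decryption or a key-dependent-message style assumption. This is a modeling simplification common to the paper's argument, not a defect of your route relative to it.
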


\begin{proof}

We prove Lemma~\ref{lemma:security} by contradiction.
Suppose there exists a PPT adversary $\mathcal{A}$ that breaks the confidentiality guarantee stated in Lemma~\ref{lemma:security}, i.e., $\mathcal{A}$ can distinguish the encrypted upload of one client update from the encrypted upload of another equal-length client update, beyond $\mathcal{L}$, with non-negligible probability.
We show how to use $\mathcal{A}$ to construct another adversary $\mathcal{B}$ that breaks the IND-CPA security of the underlying CKKS encryption scheme.

$\mathcal{B}$ participates in the CKKS IND-CPA game and submits two equal-length plaintext updates $\hat{\Theta}_i^0$ and $\hat{\Theta}_i^1$ with the same $\mathcal{L}$ to the challenger.
The challenger samples a bit $b \in \{0,1\}$ and returns a challenge ciphertext $c^\star = \mathsf{Enc}_{pk}(\hat{\Theta}_i^b)$.
$\mathcal{B}$ then simulates the view of the honest-but-curious server for $\mathcal{A}$, using $c^\star$ as the target uploaded ciphertext $\hat{\Theta}_i^e$.
Since $\mathcal{A}$ can determine whether $c^\star$ encrypts $\hat{\Theta}_i^0$ or $\hat{\Theta}_i^1$ with non-negligible advantage, $\mathcal{B}$ can output the same guess and hence win the CKKS IND-CPA game with non-negligible advantage.

This contradicts the IND-CPA security of CKKS, which holds under the RLWE hardness assumption.
Therefore, such an adversary $\mathcal{A}$ cannot exist, and Lemma~\ref{lemma:security} follows.

\end{proof}

We note that the work in~\cite{li2021security} identified a stronger security notion, IND-CPAD$^\mathsf{D}$, where an adversary can approximate plaintext outputs when it has access to a decryption oracle. 
However, the adversary (server) is never granted decryption access, making this notion not applicable in our setting.



\end{document}